\documentclass{article} 
\usepackage{iclr2025_conference,times}


\usepackage{amsmath,amsfonts,bm}









\def\eqref#1{equation~\ref{#1}}









\def\1{\bm{1}}










\DeclareMathAlphabet{\mathsfit}{\encodingdefault}{\sfdefault}{m}{sl}
\SetMathAlphabet{\mathsfit}{bold}{\encodingdefault}{\sfdefault}{bx}{n}













\DeclareMathOperator*{\argmax}{arg\,max}

\usepackage{hyperref}
\usepackage{url}
\usepackage{amsthm}
\usepackage{thmtools}
\usepackage{algpseudocode}
\usepackage{algorithm}
\usepackage{booktabs}
\usepackage{enumitem}
\usepackage{setspace}
\usepackage{placeins}
\usepackage{caption}
\usepackage{array, float}
\usepackage{graphicx}
\usepackage{titlesec}
\usepackage{sidecap}
\usepackage{bbm}
\titlespacing*{\section}{0pt}{*0.25}{*0.125}
\titlespacing*{\subsection}{0pt}{*0.125}{*0.0625}

\theoremstyle{definition}

\newtheorem*{theorem*}{Theorem}

\newtheorem{definition}{Definition}
\newtheorem{example}{Example}

\DeclareMathOperator*{\argsup}{arg\,sup}
\DeclareMathOperator*{\arginf}{arg\,inf}

\usepackage[colorinlistoftodos]{todonotes}

\title{Expected Return Symmetries}


\author{%
  Darius Muglich\thanks{Equal contribution} \\
  University of Oxford\\
  \texttt{darius@robots.ox.ac.uk}%
  \And
  Johannes Forkel\footnotemark[1] \\
  University of Oxford\\
  \texttt{johannes.forkel@eng.ox.ac.uk}%
  \And
  Elise van der Pol \\
  Microsoft Research AI for Science \\
  \texttt{evanderpol@microsoft.com}%
  \And
  Jakob Foerster\\
  University of Oxford\\
  \texttt{jakob.foerster@eng.ox.ac.uk}%
}

%

\iclrfinalcopy 
\begin{document}

\maketitle

\begin{abstract}
Symmetry is an important inductive bias that can improve model robustness and generalization across many deep learning domains. In multi-agent settings, \textit{a priori known} symmetries have been shown to address a fundamental coordination failure mode known as \textit{mutually incompatible symmetry breaking}; e.g. in a game where two independent agents can choose to move ``left'' or ``right'', and where a reward of $+1$ or $-1$ is received when the agents choose the same action or different actions, respectively. However, the efficient and automatic \textit{discovery} of environment symmetries, in particular for decentralized partially observable Markov decision processes, remains an open problem. Furthermore, environmental symmetry breaking constitutes only one type of coordination failure, which motivates the search for a more accessible and broader symmetry class. In this paper, we introduce such a broader group of previously unexplored symmetries, which we call \textit{expected return symmetries}, which contains environment symmetries as a subgroup. We show that agents trained to be compatible under the group of expected return symmetries achieve better zero-shot coordination results than those using environment symmetries. As an additional benefit, our method makes minimal a priori assumptions about the structure of their environment and does not require access to ground truth symmetries.
\end{abstract}

\section{Introduction}\label{sec:intro}

Incorporating the symmetries of an underlying problem into models has had demonstrable success in improving generalization and accuracy across many different machine learning domains \citep{bronstein2021geometric, krizhevsky2017imagenet, cohen2019gauge, finzi2020generalizing, van2020mdp}. As an important example, using data augmentations and equivariant networks has been shown to improve \textit{zero-shot coordination} (ZSC), the ability of independently trained agents to coordinate in cooperative multi-agent settings at test time \citep{hu2020other, muglich2022equivariant}. Without accounting for symmetries, independently trained agents can converge onto equivalent yet mutually incompatible policies during training, leading to coordination failure at test time. For example, one team of agents might use the color ``blue'' to signal ``play'' in a cooperative card game like Hanabi, while a different team might use the equivalent color ``red''. This issue, known as \textit{mutually incompatible symmetry breaking}, can be mitigated by incorporating symmetries like color into the training process, as is done in other-play (OP)~\citep{hu2020other}. OP addresses this by applying an independently sampled symmetry transformation to each agent during every training episode, ensuring compatibility amongst equivalent policies.

Environmental symmetry breaking is a form of over-coordination, where agents adopt arbitrary conventions that hinder new partners' adaptation within an episode. Yet, this is only one failure mode; even without non-trivial symmetries, agents may still over-coordinate by forming overly specific conventions. Moreover, current symmetry-based methods assume \textit{a priori} access to these symmetries, but automatically discovering them—especially in large-scale Dec-POMDPs \citep{oliehoek2007dec}—can be computationally infeasible \citep{narayanamurthy2008hardness}.

To address these two issues, in this paper, we define the group of \textit{expected return symmetries} (ER symmetries), a relatively underexplored symmetry group that contains the environment symmetries of a Dec-POMDP as a subgroup. Since in most cases ER symmetries are a strict superset of the environment symmetries, using them as the symmetry group for OP enforces training time compatibility with a greater number of policies.

\begin{SCfigure}[1][t]
    \centering
    \includegraphics[width=0.45\linewidth, trim={0 2cm 0 2cm}, clip]{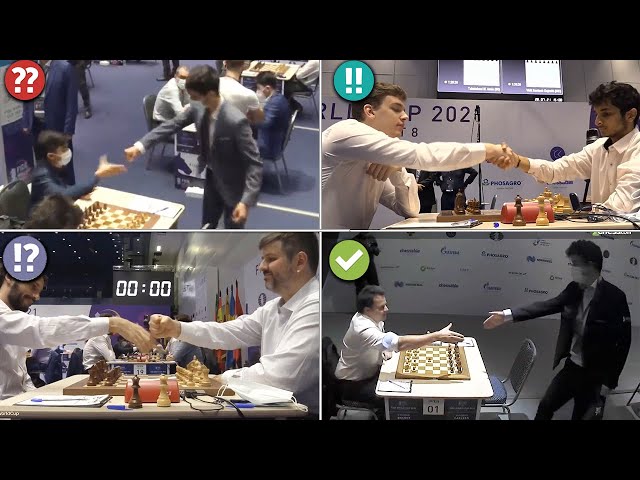}
    \caption{\small Mutually incompatible symmetry breaking between chess players is shown in the left side panels \citep{chesscom2021}. Let $\pi_1$, $\pi_2$ represent the joint policies under which both players choose handshake, fist bump, respectively, and let the reward be $\pm 1$, depending on whether they match or not. The self-play score of both joint policies is $1$, but the cross-play score between them is $-1$. Thus, policies incompatibly break the symmetry between handshake and fist bump.}
    \label{fig:symm-break}
\end{SCfigure}

We also introduce a scalable method for discovering approximate ER symmetries, which leverages gradient-based optimization to search for transformations that preserve expected returns across optimal policies. Furthermore, we show that ER symmetries better improve zero-shot coordination amongst independently trained agents than Dec-POMDP symmetries, while maintaining completely model-free assumptions. To summarize, our main contributions are:

\begin{enumerate}
    \item We define the group of expected return symmetries and introduce novel algorithms for learning them.
    \item We demonstrate that, when combined with the OP learning rule, expected return symmetries are significantly more effective at preventing over-coordination than Dec-POMDP symmetries. Furthermore, we demonstrate that our method is applicable in settings where both off-belief learning \cite{hu2021off} and cognitive hierarchies \cite{cui2021k} fail. 
    \item To the best of our knowledge, our method is the first symmetry-based method to improve zero-shot coordination without a priori/privileged environment information, such as of symmetries or dynamics.
\end{enumerate}

\section{Background}

\subsection{Decentralized Partially Observable Markov Decision Processes}\label{sec:decpomdp}

We model cooperative multi-agent tasks as a Decentralized POMDP (Dec-POMDP) \cite{oliehoek2007dec}, defined by the 9-tuple
\[
(\mathcal{S}, n, \{\mathcal{A}^i\}_{i=1}^n, \{\mathcal{O}^i\}_{i=1}^n, \mathcal{T}, \mathcal{R}, \{\mathcal{U}^i\}_{i=1}^n, H, \gamma).
\]
Here, \(\mathcal{S}\) is the state space, \(n\) the number of agents, and for each agent \(i\), \(\mathcal{A}^i\) and \(\mathcal{O}^i\) denote the local action and observation spaces (with joint action and observation spaces \(\mathcal{A} = \times_{i=1}^n \mathcal{A}^i\) and \(\mathcal{O} = \times_{i=1}^n \mathcal{O}^i\)). The dynamics are given by
\[
s_{t+1} \sim \mathcal{T}(s_{t+1}\,|\,s_t,a_t), \quad o_{t+1}^i \sim \mathcal{U}^i(o_{t+1}^i\,|\,s_{t+1},a_t),
\]
with rewards \(r_{t+1} = \mathcal{R}(s_{t+1},a_t)\), horizon \(H\) (with terminal state \(s_H\)), and discount factor \(\gamma\). Each agent \(i\) follows a local policy \(\pi^i\), which chooses local actions based on the local action-observation history (AOH) \(\tau_t^i = (a_0^i, o_1^i, \dots, a_{t-1}^i, o_t^i)\), while the joint policy \(\pi=(\pi^1,\dots,\pi^n)\) selects joint actions \(a_t = (a_t^1,\dots,a_t^n)\) based on the joint AOH $\tau_t = (\tau_t^1, ..., \tau_t^n)$, with probability
\[
\pi(a_t|\tau_t)=\prod_{i=1}^n \pi^i(a_t^i|\tau_t^i).
\]
The self-play objective is defined by the expected discounted return:
\[
J(\pi)=\mathbb{E}_\pi\left[\sum_{t=0}^{H-1}\gamma^t r_{t+1}\right].
\]

\subsection{Zero-Shot Coordination}\label{sec:zsc}

The self-play objective is widely used in multi-agent reinforcement learning (MARL) \citep{samuel1959some, tesauro1995temporal}, where agents train together under a joint policy. While effective for coordination, it often results in arbitrary conventions that only work among agents trained together. However, many real-world tasks require coordination with unknown partners \citep{mariani2021coordination, resnick2018vehicle, kakish2019open}, making this limitation problematic.

To address this, \citet{hu2020other} introduced zero-shot coordination (ZSC). In ZSC, agents first agree on a learning rule, which they each implement independently (e.g., cannot agree on seeds). Each agent then trains a joint policy in a Dec-POMDP environment, without communication or coordination between agents during training. Finally, agents participate in \textit{cross-play} (XP), where joint policies trained by different agents are combined to evaluate the XP objective (defined here only for $n=2$, but it can be extended to $n > 2$):

\vspace{-15.5pt}
\begin{align} \label{eq:XP}
    \text{XP}(\pi_1, \pi_2) := \frac{1}{2} \left( J(\pi_1^1, \pi_2^2) + J(\pi_2^1, \pi_1^2) \right),
\end{align}
\vspace{-15pt}

for independently learned joint policies $\pi_1$ and $\pi_2$. ZSC aims for learning rules that optimize cross-play with other rational partners using the same minimal set of assumptions (e.g., no access to behavioral data or coordination experience with specific groups of agents). ZSC presents a promising approach to addressing real-world coordination challenges where relying on arbitrary conventions is impractical. ZSC has become a key benchmark for human-AI coordination and is an important step towards more generalized coordination capabilities \citep{ji2023ai, hu2021off}.

\subsection{Symmetry Groups and Other-Play in Dec-POMDPs}\label{sec:symmetry}

We consider symmetries that can be expressed as maps $\phi = (\phi_{\mathcal{S}}, \phi_{\mathcal{A}}, \phi_{\mathcal{O}})$, consisting of bijective maps $\phi_{\mathcal{S}}:\mathcal{S} \rightarrow \mathcal{S}$, $\phi_{\mathcal{A}^i}:\mathcal{A}^i \rightarrow \mathcal{A}^i$, and $\phi_{\mathcal{O}^i}:\mathcal{O}^i \rightarrow \mathcal{O}^i$, $i = 1,..., n$, in the sense that $\phi_{\mathcal{A}}(a) = (\phi_{\mathcal{A}^1}(a^1), ..., \phi_{\mathcal{A}^n}(a^n))$, $\phi_{\mathcal{O}}(o) = (\phi_{\mathcal{O}^1}(o^1), ..., \phi_{\mathcal{O}^n}(o^n))$, for all $a \in \mathcal{A}$ and $o \in \mathcal{O}$. The set of all such maps forms a group, which we denote by $\Psi$. Given $\phi \in \Psi$, we slightly abuse notation and define $\phi(s) := \phi_{\mathcal{S}}(s)$, $\phi(a^i) := \phi_{\mathcal{A}^i}(a^i)$ and $\phi(o^i) := \phi_{\mathcal{O}^i}(o^i)$, for $s \in \mathcal{S}$, $a^i \in \mathcal{A}^i$ and $o^i \in \mathcal{O}^i$. Given a joint AOH $\tau_t = (\tau_t^1, ..., \tau_t^n)$, we define $\phi(\tau_t) := (\phi(\tau_t^1), ..., \phi(\tau_t^n))$, with $\phi(\tau_t^i) := (\phi(a_0^i), \phi(o_1^i), ..., \phi(a_{t-1}^i), \phi(o_t^i))$. Furthermore, we let a symmetry $\phi \in \Psi$ act on joint policies through the formula

\begin{equation}\label{eq:symm-equiv}
    \phi(\pi)^i(\phi(a_t^i) \ | \ \phi(\tau_t^i)) := \pi^i(a_t^i \ | \ \tau_t^i).
\end{equation}

Any subgroup $\Phi \subset \Psi$ partitions the space of joint policies into disjoint equivalence classes: given a joint policy $\pi$, we define its equivalence class $[\pi] := \{\phi(\pi): \phi \in \Phi\}$.

\begin{definition}[Dec-POMDP Symmetries]
A map $\phi \in \Psi$ is called a Dec-POMDP symmetry if for all $(s_t, a_t, s_{t+1}, o_{t+1}) \in \mathcal{S} \times \mathcal{A} \times \mathcal{S} \times \mathcal{O}$ it holds that
\begin{equation}
\begin{aligned}
\mathcal{T}(s_{t+1} \mid s_t, a_t) =& \, \, \mathcal{T}(\phi(s_{t+1}) \mid \phi(s_t), \phi(a_t)),\\
\mathcal{U}(o_{t+1} \mid s_{t+1}, a_t) =& \, \, \mathcal{U}(\phi(o_{t+1}) \mid, \phi(s_{t+1}), \phi(a_t)),\\
\mathcal{R}(s_{t+1}, a_t) =& \, \, \mathcal{R}(\phi(s_{t+1}), \phi(a_t)).
\end{aligned}
\label{eq:decpomdp-invariance}
\end{equation}
We denote the set of all the Dec-POMDP symmetries of a given Dec-POMDP by $\Phi^{\text{MDP}}$. 
\end{definition}

Dec-POMDP symmetries form a subgroup of $\Psi$, which corresponds to relabelings of the state, action, and observation spaces that leave all the transition and reward functions of the Dec-POMDP unchanged (see Example \ref{ex:toy-coordination}). We note that Dec-POMDP symmetries can be extended to also include permutations between players \citep{treutlein2021new}.

\citet{hu2020other} demonstrate that policies equivalent under $\Phi^{\text{MDP}}$ are prone to mutually incompatible symmetry breaking: without biases like initialization or reward shaping, the learning rule may converge to either $\pi$ or its equivalent $\phi(\pi)$, as the learning process cannot distinguish between them due to the Dec-POMDP symmetries. Although $J(\pi) = J(\phi(\pi))$, the cross-play score may suffer, i.e. $J(\pi) > \text{XP}(\pi, \phi(\pi))$, meaning $\pi$ and $\phi(\pi)$ were not trained to be compatible. See Figure \ref{fig:symm-break}. 

To constrain policies to be compatible with policies that are equivalent with respect to the symmetry group $\Phi^{\text{MDP}}$, \cite{hu2020other} introduced the \textit{other-play objective} (can be extended to $n>2$): 

\begin{definition}[Other-Play (OP) Objective]
    Given a Dec-POMDP and a symmetry group $\Phi \subset \Psi$, we define the other-play (OP) objective $\text{OP}^\Phi: \Pi \rightarrow \mathbb{R}$ w.r.t. $\Phi$ by
    \begin{align}\label{eq:other-play}
    \text{OP}^\Phi(\pi) :=& \mathbb{E}_{\tilde{\pi} \in [\pi]} \left[ \text{XP}(\pi, \tilde{\pi}) \right] = \frac{1}{2|[\pi]|} \sum_{\tilde{\pi} \in [\pi]} \left(  J (\pi^1, \tilde{\pi}^2) + J(\tilde{\pi}^1, \pi^2) \right).
\end{align}
\end{definition}
\vspace{-8pt}

\cite{hu2020other} proposed the \textit{OP learning rule} $\pi_* = \argsup_{\pi} \text{OP}^{\Phi}(\pi)$, for $\Phi = \Phi^{\text{MDP}}$. Agents trained using the OP objective take into account modes of symmetry breaking resulting from the fact that a test-time partner is unbiased in their choice between $\phi(\pi)$ and $\pi$, $\forall \phi \in \Phi$.

\section{Method}\label{sec:method}

In Section \ref{sec:general-op}, we generalize the OP objective (Equation \ref{eq:other-play}) to be defined over a general symmetry group $\Phi$ and highlight desirable properties that a symmetry group for OP should satisfy. In Section \ref{sec:return-symmetries}, we define the group of expected return symmetries, which we argue is better suited for OP than Dec-POMDP symmetries by way of the aforementioned desirable properties. In Section \ref{sec:algorithm}, we propose a method for learning expected return symmetries.

\subsection{Other-Play over General Symmetry Groups}\label{sec:general-op}

While \cite{hu2020other} introduced symmetry breaking w.r.t. $\Phi^{\text{MDP}}$, we extend the definition for a general symmetry group $\Phi$:
\begin{definition}[Symmetry Breaking] \label{def:symmetry breaking}
    Given a symmetry group $\Phi$, a joint policy $\pi$ is said to incompatibly break symmetry w.r.t. $\phi \in \Phi$ if $J(\pi) > \text{XP}(\pi, \phi(\pi))$, and w.r.t. $\Phi$ if $J(\pi) > \text{OP}^\Phi(\pi)$.
\end{definition}

The OP objective (Equation \ref{eq:other-play}) evaluates the expected return when an agent from one policy is matched in a team with members of randomly chosen policies from the same equivalence class induced by $\Phi$. Thus OP optimal policies are maximally compatible with policies within their equivalence class, as they best avoid incompatible symmetry breaking w.r.t. $\Phi$. We note that different $\text{OP}^\Phi$-optimal policies are not necessarily in the same equivalence class and can therefore be incompatible; for example, when $\Phi = {\mathbf{Id}}$, $\text{OP}^\Phi$ reduces to $\text{SP}$, and each $\text{OP}^\Phi$-optimal policy forms its own one-element equivalence class. We denote $\Pi_*^\Phi$ as the set of all optimal policies under $\text{OP}^\Phi$.

Rational and independent ZSC agents would therefore choose a symmetry group $\Phi$ such that:

\vspace{-3pt}
\begin{enumerate}
    \item \label{item:over-coord} \textbf{(Diversity within Equivalence Classes)} The choice of $\Phi$ should ensure that for all $\pi \in \Pi_*^\Phi$, the equivalence class $[\pi]$ is meaningfully \textit{diverse}. This diversity should be such that using OP to enforce $\pi$ and $[\pi]$ to be compatible makes $\pi$ broadly compatible with policies it could encounter at test-time, assuming other agents also adopt $\text{OP}^\Phi$ as their learning rule.
    
    \item \label{item:preserve-performance} \textbf{(Optimality within Equivalence Classes)} $\Phi$ should separate poor policies from good policies, i.e. $\text{OP}^\Phi(\pi') \approx \text{OP}^\Phi(\pi)$, for any $\pi \in \Pi_*^\Phi$ and $\pi' \in [\pi]$, since otherwise there is no reason to constrain oneself to be compatible with $\pi'$ (a rational test-time partner would not use $\pi'$).
\end{enumerate}

\begin{example}
    For $\Phi = \{\mathbf{Id}\}$, $\text{OP}^\Phi = \text{SP}$, which corresponds to perfectly preserving optimality, but not introducing any diversity to the equivalence class; i.e. perfect satisfaction of Item \ref{item:preserve-performance} but extremely poor satisfaction of Item \ref{item:over-coord}. On the other extreme, we can consider $\Phi$ to be the set of all bijections on $\Pi$, which enforces compatibility with every possible test-time partner (i.e., rational partners), but also with all other possible policies (which are mostly poor), thus converging to the best response to a random player; i.e., perfect satisfaction of Item \ref{item:over-coord} but extremely poor satisfaction of Item \ref{item:preserve-performance}.
\end{example}

The goal of ZSC is to find a learning rule that maximizes the expected XP score between independently trained test-time partners. For the learning rule $\text{OP}^\Phi$ the expected XP score is:
\begin{equation}\label{eq:true-objective}
    \text{XP}(\text{OP}^\Phi_*) := \mathbb{E}_{\pi_1, \pi_2 \sim \Pi_*^\Phi} \left[ \text{XP}(\pi_1, \pi_2) \right].
\end{equation}
Thus one can interpret $\max_{\pi \in \Pi} \text{OP}^\Phi(\pi)$ as estimating $\text{XP}(\text{OP}^\Phi_*)$, but only through the cross-play scores \textit{within} a given equivalence class $[\pi]$ induced by $\Phi$:
\begin{equation}
    \max_{\pi \in \Pi} \text{OP}^\Phi(\pi) = \mathbb{E}_{\pi \sim \Pi_*^\Phi} \left[ \text{OP}^\Phi(\pi) \right] = \mathbb{E}_{\pi_1 \sim \Pi_*^\Phi} \left[ \mathbb{E}_{\pi_2 \sim [\pi_1]} \left[ \text{XP}(\pi_1, \pi_2) \right] \right].
\end{equation}
Since optimal $\text{OP}^\Phi$ policies are not trained to be compatible across \textit{different} equivalence classes, we can always assume w.l.o.g. that $\text{OP}^\Phi(\pi) > \mathbb{E}_{\pi'' \in [\pi']} \left[\text{XP}(\pi, \pi'')\right]$ for any $\pi, \pi' \in \Pi_*^\Phi$ for which $\pi \notin [\pi']$ (else one just merges $[\pi]$ and $[\pi']$ by adding the transposition\footnote{A transposition is a permutation that swaps exactly two elements.} between $\pi$ and $\pi'$ to $\Phi$). Thus if Item \ref{item:preserve-performance} is satisfied perfectly, it follows that
\begin{equation}\label{eq:over-coordination-gap}
    \max_{\pi \in \Pi} \text{OP}^\Phi(\pi) \geq \text{XP}(\text{OP}^\Phi_*),
\end{equation}
with equality if and only if $[\pi] = \Pi_*^\Phi$ for any and thus all $\pi \in \Pi_*^\Phi$. This means that if one finds a symmetry group $\Phi$ which satisfies both of Items \ref{item:over-coord} and \ref{item:preserve-performance} perfectly, then $[\pi] = \Pi_*^\Phi$ for any $\pi \in \Pi_*^\Phi$ and $\text{OP}^\Phi(\pi) = \text{XP}(\text{OP}^\Phi_*)$. In other words, with such a choice of $\Phi$, agents during training account for any potential test-time partner produced by $\text{OP}^\Phi$, and only for such partners. Items \ref{item:over-coord} and \ref{item:preserve-performance} are thus desirable criteria for choosing a group $\Phi$ that makes $\text{OP}^\Phi$ a suitable learning rule for ZSC. However, these criteria alone are \textit{not sufficient} for $\Phi$ to be optimal for ZSC, because agents in ZSC cannot choose a symmetry group $\Phi$ that is tailored to a specific Dec-POMDP. For example, suppose agents select $\Phi$ as the set of all bijections on $\Pi$ that leave a particular SP optimal policy $\pi$ unchanged. In this case, except for trivially simple Dec-POMDPs, we have $\Pi_*^\Phi = [\pi] = \{\pi\}$. This choice of $\Phi$ would trivially satisfy Item \ref{item:over-coord} (since $\{\pi\}$ is entirely representative of test-time policies) and Item \ref{item:preserve-performance} (since $\pi \in \Pi_*^\Phi$ and $\phi(\pi) = \pi, \forall \phi \in \Phi$). However, such a symmetry group is not permissible in ZSC because it is specifically constructed for a particular policy in a specific Dec-POMDP, violating the requirement for generality in ZSC. Therefore, while Items \ref{item:over-coord} and \ref{item:preserve-performance} are desirable properties, they are not sufficient on their own for choosing an appropriate symmetry group $\Phi$ for ZSC.

\subsection{Expected Return Symmetries}\label{sec:return-symmetries}

We propose that the group $\Phi^{\text{ER}}$ of expected return (ER) symmetries, which can be learned with completely model-free assumptions, handles the above trade-off given by Items \ref{item:over-coord} and \ref{item:preserve-performance} favorably:

\begin{definition}[Expected Return Symmetries]

For a fixed temperature $\alpha > 0$, define the set of Boltzmann exploratory policies as
\[
\Pi^\alpha := \left\{ \pi \in \Pi \;\Big|\, \, \pi^i(a_t^i \mid \tau_t^i) = \frac{\exp\Big(\frac{Q_\pi^i(\tau_t^i,a_t^i)}{\alpha}\Big)}{\sum_{a^i \in \mathcal{A}^i(\tau_t^i)} \exp\Big(\frac{Q_\pi^i(\tau_t^i,a^i)}{\alpha}\Big)} \;, \forall\, a_t^i \in \mathcal{A}^i(\tau_t^i), \, \, \forall \tau_t^i, \, \, \forall i = 1,..., n\right\},
\]
where $Q_\pi^i(\tau_t^i, a_t^i)$ is the action-value function of $\pi^i$, when assuming that all other agents also follow $\pi$. The set of self-play optimal Boltzmann exploratory policies is then defined as
\[
\Pi^\alpha_* := \argmax_{\pi \in \Pi^\alpha} J(\pi).
\]
Finally, define the group of expected return symmetries as 
\begin{equation*}\label{eq:expected-return-invariance}
\Phi^{\mathrm{ER}} := \Big\{ \phi \in \Psi \;\Big|\; \forall \pi \in \Pi^\alpha_*: \phi(\pi) \in \Pi^\alpha_* \Big\},
\end{equation*}
i.e., the subset of transformations preserving self-play optimality of Boltzmann-exploratory policies.

\end{definition}
\vspace{-0.1cm}
We will suppress the dependence of $\Phi^{\text{ER}}$ on $\alpha$, but it is important to define $\Phi^{\text{ER}}$ over self-play optimal Boltzmann-exploratory policies: if one were to allow self-play optimal policies that do not explore the entire space of AOHs in this structured way, then expected return symmetries would not be properly restricted in their effect on policies at suboptimal AOHs, and $\Phi^{\text{ER}}$ could contain symmetries that inhibit coordination. For small $\alpha$, policies learned via Q-learning can be converted—albeit approximately, due to the algorithm being off-policy —into policies belonging to $\Pi^\alpha$ by using Boltzmann sampling on the learned local action values. As well, entropy-regularized policy gradient methods yield policies whose actor networks produce logits proportional to the corresponding Q-values, thereby resulting in policies in $\Pi^\alpha$ (see Appendix \ref{appendix:boltzmann-exp} for details).

$\Phi^{\text{ER}}$ is a group under function composition (see Appendix \ref{appendix:group-properties}). Furthermore, since for any Dec-POMDP symmetry $\phi \in \Phi^{\text{MDP}}$ and any joint policy $\pi$ it holds that $J(\pi) = J(\phi(\pi))$, and also that $\phi(\pi)$ is Boltzmann exploratory, it follows that $\Phi^{\text{ER}}$ contains $\Phi^{\text{MDP}}$ as a subgroup (see Appendix \ref{appendix:group-properties}). 

At a high level, $\Phi^{\text{MDP}}$ captures coordination differences based only on relabeling actions and observations, offering limited diversity. In contrast, self-play-optimal policies can differ significantly in coordination strategies, beyond mere label permutations. By grouping such diverse policies into the same equivalence class, $\Phi^{\text{ER}}$ better addresses Item \ref{item:over-coord} than $\Phi^{\text{MDP}}$. While $\Phi^{\text{ER}}$ may not fully satisfy Item \ref{item:preserve-performance} (as $\Phi^{\text{ER}}$ is only required to preserve the expected return of self-play-optimal policies), we show in Section \ref{sec:hanabi-experiments} that learned symmetries in $\Phi^{\text{ER}}$ approximately meet this criterion. Overall, this suggests ZSC agents using $\text{OP}^{\Phi^{\text{ER}}}$ will coordinate better at test time than those using $\text{OP}^{\Phi^{\text{MDP}}}$. Section \ref{sec:experiments} confirms that $\Phi^{\text{ER}}$ significantly improves (zero-shot) coordination across various environments.

The following example illustrates that ER symmetries can strictly contain Dec-POMDP symmetries.

\begin{example}[Toy coordination game]\label{ex:toy-coordination}
Consider a two-agent common-payoff Dec-POMDP with full observability, horizon $H=2$, discount $\gamma=1$, start state $s_0$, and local actions $\mathcal A^1=\mathcal A^2=\{0,1\}$. At $t=0$, joint action $(0,0)$ transitions to $x$ and receives reward $1$; joint action $(1,1)$ transitions to $m$ and receives reward $0$; mismatches transition to $b$ and receive reward $0$. At $t=1$, state $m$ yields reward $1$ and terminates, while $x$ and $b$ yield reward $0$ and terminate.

Let $\phi$ fix states and observations and swap the two actions for both agents. Then $\phi\notin\Phi^{\mathrm{MDP}}$: for instance,
\[
T(x\mid s_0,(0,0))=1
\quad\text{but}\quad
T(x\mid s_0,\phi(0,0))=T(x\mid s_0,(1,1))=0.
\]
However, after the Bellman backup at $s_0$, both coordinated actions have the same total value. Writing $q_a=\pi^{-i}(a\mid s_0)$, we have
\[
Q_\pi^i(s_0,0)=q_0,
\qquad
Q_\pi^i(s_0,1)=q_1,
\]
which is exactly the $Q$-structure of the symmetric $2\times2$ coordination game. Hence, for $\alpha<1/2$, the two outer self-play-optimal Boltzmann conventions solving
\[
x=\sigma\left(\frac{2x-1}{\alpha}\right)
\]
are mapped into one another by the action swap. Thus $\phi(\Pi_*^\alpha)=\Pi_*^\alpha$ while $\phi\notin\Phi^{\mathrm{MDP}}$, so $\phi\in\Phi^{\mathrm{ER}}\setminus\Phi^{\mathrm{MDP}}$.
\end{example}

\subsection{Algorithmic Approaches}\label{sec:algorithm}

In practice, if $\Phi^{\text{ER}}$ is large, we find that it can be effectively approximated by a limited number of learned ER symmetries, that are sufficiently diverse (see Section \ref{sec:general-op} for the importance of diversity). We therefore develop an algorithm to learn such a subset of ER symmetries, which we find in Section \ref{sec:experiments} is sufficient to significantly enhance zero-shot coordination across various environments.

Based on Equation \ref{eq:expected-return-invariance}, we formulate the following objective for learning ER symmetries:
\begin{equation}\label{eq:symm-optim}
    \phi_{\theta^*} \ \text{  where  } \ \theta^* = \arginf_{\theta \in \Theta} \mathbb{E}_{\pi \sim \Pi'} \big[ \left|J(\pi) - J(\phi_\theta(\pi)) \right| \big],
\end{equation}
where $\{\phi_\theta: \theta \in \Theta\}$ is a parameterization, and $\Pi' \subset \Pi_*^\alpha$ is a fixed pool of SP optimal Boltzmann exploratory policies. The broader the set $\Pi'$ is, the more representative it will be of $\Pi_{*}^\alpha$, and hence the less likely the learned $\phi_{\theta^*}$ will overfit to a specific policy (i.e. not able to preserve expected return for other optimal policies outside the training set).

Since the policies in Equation \ref{eq:symm-optim} are approximately SP optimal, we can use the equivalent objective
\begin{equation}\label{eq:sp-max}
   \phi_{\theta^*} \ \text{ where  } \ \theta^* = \argsup_{\theta \in \Theta} \mathbb{E}_{\pi \sim \Pi'} \  \left[ J(\phi_\theta(\pi)) \right].
\end{equation}
Importantly, the optimization in Equation \ref{eq:sp-max} focuses only on the ER symmetry $\phi_\theta$. In this process, we train the transformation $\phi_\theta$ within a reinforcement learning loop, but we keep the weights of the policies in $\Pi'$ fixed. See Algorithm \ref{alg:learn-symmetries} in Appendix \ref{appendix:algorithms} for details. The Boltzmann exploration of the policies in $\Pi'$ enforces that $\phi_\theta$ takes into account all possible AOHs during training, not just optimal ones. Furthermore, when training $\phi$ using policy gradient methods, we use entropy-regularization to ensure $\phi(\pi)$ yields a self-play optimal Boltzmann exploratory policy (see Appendix \ref{appendix:boltzmann-exp} for details).

Recall that $\phi_\theta = \{\phi_{\mathcal{S},\theta}, \phi_{\mathcal{O},\theta}, \phi_{\mathcal{A},\theta}\}$. Since we are interested in expected return symmetries insofar as they act on the policy space rather than the Dec-POMDP itself, we fix $\phi_{\mathcal{S},\theta} = \mathbf{Id}$. Furthermore, since typically $|\mathcal{A}| \ll |\mathcal{O}|$, rather than learning both $\{\phi_{\mathcal{O},\theta}, \phi_{\mathcal{A},\theta}\}$ in a reinforcement learning loop, we can consider learning $\phi_{\mathcal{A},\theta}$ through search over tuples of permutations on the local action spaces; i.e. we initialize a fixed tuple of local action permutations as $\phi_{\mathcal{A},\theta}$ and learn $\phi_{\mathcal{O},\theta}$ as per Equation \ref{eq:sp-max}. See Algorithm \ref{alg:learn-symmetries} in Appendix \ref{appendix:algorithms} for an outline of this procedure.

Each local action space $\mathcal{A}^i$ allows for $\binom{|\mathcal{A}^i|}{2}$ distinct transpositions, representing the number of unique ways two actions can be permuted. Consequently, learning one observation transformation $\phi_{\mathcal{O},\theta}$ corresponding to every possible tuple of local action transpositions requires $O(\prod_{i=1}^n|\mathcal{A}^i|^2)$ optimizations of Equation \ref{eq:sp-max} to perform an exhaustive search (i.e. $O(\prod_{i=1}^n|\mathcal{A}^i|^2)$ iterations of the outer for loop in  Algorithm \ref{alg:learn-symmetries}). However, in Section \ref{sec:hanabi-experiments} we show that a non-exhaustive search that undersamples the space of tuples of transpositions is still sufficient for learning symmetries that significantly improve coordination amongst agents.

Note that the objective in Equation \ref{eq:sp-max} does not enforce \textit{closure under composition} (i.e. $\phi_1 \circ \phi_2 \in \Phi^{\text{ER}}$ for all $\phi_1, \phi_2 \in \Phi^{\text{ER}}$) or \textit{invertibility} (i.e. for all $\phi \in \Phi^{\text{ER}}$ there exists $\phi^{-1} \in \Phi^{\text{ER}}$ such that $\phi^{-1} \circ \phi = \mathbf{Id}$), both necessary for $\Phi^{\text{ER}}$ to form a group. To learn ER symmetries which are compositional and invertible, we use the following regularized objective: 
\begin{align}\label{eq:sp-max-with-reg-stoch}
\phi_{\theta^*} \ \text{  where  } \ \theta^* = \arg\max_{\theta \in \Theta} \,&\mathbb{E}_{\pi \sim \Pi'} \Big[ (1-\lambda_1) J(\phi_\theta(\pi)) + \lambda_1 \cdot \mathbb{E}_{\hat{\phi}_i,\hat{\phi}_j \sim \hat{\Phi}^{\text{ER}}} \big[J(\hat{\phi}_i \circ \phi_\theta \circ \hat{\phi}_j(\pi))\big] \Big] \nonumber \\
& - \lambda_2 \mathbb{E}_{o \in \mathcal{O}} \left[ d(o, \phi_\theta^2(o))^2 \right] 
\end{align}
where $d:\mathcal{O}^2 \rightarrow [0, \infty)$ is a metric, $\hat{\Phi}^{\text{ER}}$ is a fixed pool of unregularized ER symmetries learned through Equation \ref{eq:sp-max}Algorithm \ref{alg:learn-symmetries}, and $\lambda_1 \in [0, 1), \, \lambda_2 \in [0, \infty)$ control the regularization towards compositionality and invertibility, respectively. Since $\phi_{\mathcal{A},\theta}$ is a fixed transposition, $\phi_{\mathcal{A},\theta}^2 = \mathbf{Id}$ by design, so we can easily enforce $\phi_{\mathcal{O},\theta}^2 = \mathbf{Id}$. The objective can be optimized stochastically, to avoid computing multiple policy gradients per update. This is detailed in Algorithm \ref{alg:learn-symmetries-with-reg-stoch} in Appendix \ref{appendix:algorithms}. Note that in the term $\mathbb{E}_{o \in \mathcal{O}} \left[ d(o, \phi_\theta^2(o))^2 \right]$ we abuse notation, and let $\phi_\theta$ map into a continuous extension of $\mathcal{O}$, otherwise this term would be locally constant with a gradient of zero almost everywhere.

We also propose an alternative objective for learning ER symmetries through XP maximization:
\begin{equation}\label{eq:xp-max}
   \phi_{\theta^*} \ \text{  s.t.  } \ \theta^* = \argsup_{\theta \in \Theta} \text{XP}(\pi_i, \phi_\theta(\pi_j)),
\end{equation}
where $\pi_i, \pi_j \in \Pi'$ are a pair of SP optimal policies chosen from the fixed training pool. If $\pi_i$ and $\pi_j$ belong to the same equivalence class induced by $\Phi^{\text{ER}}$, then by definition there exists an ER symmetry $\phi$ that maximizes Equation \ref{eq:xp-max} to the self-play optimum value of $J(\pi_i)$. Therefore, for each pair of optimal policies $\pi_i, \pi_j \in \Pi'$, we optimize Equation \ref{eq:xp-max} over $\phi_\theta$, and save the $\phi_\theta$ that optimize Equation \ref{eq:xp-max} to the highest value. We outline this approach in Algorithm \ref{alg:learn-symmetries-xp} of Appendix \ref{appendix:algorithms}. We highlight a trade-off between the objectives of Equation \ref{eq:sp-max-with-reg-stoch} and Equation \ref{eq:xp-max}: while the former more directly optimizes for an ER symmetry, it assumes $\phi_{\mathcal{A},\theta}$ to be of a certain form, while the latter assumes no such form but tacitly assumes some pair in $\Pi'$ belong to the same equivalence class.

\section{Experiments}\label{sec:experiments}

We evaluate our method in four different environments, focusing on how ER symmetries impact zero-shot coordination (ZSC) compared to self-play and other-play with Dec-POMDP symmetries. Specifically, we train independent agent populations that take advantage of ER symmetries and compare their cross-play performance within the population to baseline populations. The goal is to assess whether the use of ER symmetries leads to better coordination between agents than self-play or Dec-POMDP-symmetry-based training.

Populations of agents using ER symmetries for ZSC are formed as follows: each agent $i$ chooses $k$ seeds at random to train $k$ different optimal policies, $\Pi'_i$. Agent $i$ then independently performs ER symmetry discovery with their specific $\Pi_i'$ by optimizing Equation \ref{eq:sp-max}, Equation \ref{eq:sp-max-with-reg-stoch} or Equation \ref{eq:xp-max}, and among their learned transformations uses the $l$ that best preserve expected return as their ER symmetries. Agent $i$ then chooses $m$ seeds at random and uses their learned symmetries to train $m$ policies $\{\pi_{i,k}\}_{k=1}^m$ with reinforcement learning constrained by the learning rule in Equation \ref{eq:other-play}; multiple policies ($m > 1$) are trained to mitigate the effect of a seed that sub-optimally explores the space. Agent $i$ then selects $\pi_i := \argmax_{k=1,...,m} J(\pi_{i,k})$, and deploys $\pi_i$ for cross-play.

Aside from the environments in Sections \ref{sec:three-lever-experiment} and \ref{sec:toy-coordination-experiments}, we parameterize $\phi_{\mathcal{O},\theta}$ as a feed-forward neural network with two hidden layers. The experiments in Sections \ref{sec:overcooked-v2-experiments} and \ref{sec:hanabi-experiments} use the JaxMARL environment and implementations \citep{flair2023jaxmarl}. For details on our setup and hyperparameters, refer to Appendix \ref{appendix:exp-deets}. See Appendix \ref{appendix:interpretability} for plots of interpretability of agent play. If accepted, we will release our full working code for all four environments.

\subsection{Iterated Three-Lever Game} \label{sec:three-lever-experiment}

We consider a Dec-POMDP inspired by \cite{treutlein2021new} where two agents choose one of three levers simultaneously, earning \(+1\) for a match and \(-1\) otherwise over 2 rounds, with each agent observing the other’s previous action. Thus, \(|\mathcal{S}|=1\), \(\mathcal{A}=\mathcal{O}=\{1,2,3\}\times\{1,2,3\}\), \(r_{t+1}=\mathbbm{1}_{a_t^1=a_t^2}\), and \(o_{t+1}^1=a_t^2\), \(o_{t+1}^2=a_t^1\).

There are 6 Dec-POMDP symmetries (the 6 permutations of levers). The optimal \(\text{OP}^{\Phi^{\text{MDP}}}\) policy chooses a lever uniformly in round one; if the agents match, they repeat that choice, otherwise they switch to the unique unused lever. This yields an expected return of \(4/3\) (optimal for ZSC given a \(1/3\) chance of first-round success). Since \(\Phi^{\text{ER}}=\Phi^{\text{MDP}}\), learning all of \(\Phi^{\text{ER}}\) and applying \(\text{OP}^{\Phi^{\text{ER}}}\) approximates the optimal ZSC policy.

Each ER symmetry agent first trains \(k=20\) SP optimal policies using IQL with shared Q-values, and then makes them  Boltzmann exploratory with \(\alpha=1\). For each tuple of local action permutations, we select the tuple of local observation permutations that maximizes the objective in Equation (\ref{eq:sp-max}) (averaged over 2000 episodes). The ERS agent then picks the \(l=6\) best ER symmetries. Each of 5 ERS agents learns exactly the 6 Dec-POMDP symmetries.

This game illustrates a setting where OP outperforms OBL \citep{hu2021off}—OBL fails (in its turn-based form) as agents assuming uniform randomness converge to a uniform distribution. However, in the two-lever variant \(\text{OP}^{\Phi^{\text{MDP}}}\) (and hence \(\text{OP}^{\Phi^{\text{ER}}}\)) fails due to the lack of a unique second-round choice, leaving repeat and switch policies in distinct, incompatible equivalence classes. This underscores a limitation in the symmetry group’s expressivity.

\subsection{Toy Coordination Game}\label{sec:toy-coordination-experiments}

We first test ER symmetry discovery in the rerouted coordination game from Example~\ref{ex:toy-coordination}. This game has an analytically known non-trivial ER symmetry: the transformation that fixes states and observations and swaps the two local actions for both agents. This transformation is not a Dec-POMDP symmetry, since the two coordinated actions follow different transition paths, but it is an ER symmetry because it permutes the two self-play-optimal Boltzmann conventions.

We fix $\phi_{\mathcal{S}}=\phi_{\mathcal{O}}=\mathbf{Id}$ and learn the action permutation. The policy pool $\Pi'$ consists of the two outer Boltzmann fixed points of
\[
x=\sigma\left(\frac{2x-1}{\alpha}\right),
\]
for $\alpha<1/2$. Since both the identity and the swap preserve return on $\Pi'$, the unregularized ER objective is indifferent between them. To test recovery of the non-trivial symmetry, we add a small bias away from the identity. The learned transformation converges to the action swap, confirming that the ER objective can recover a symmetry outside $\Phi^{\mathrm{MDP}}$.

\subsection{Overcooked V2}\label{sec:overcooked-v2-experiments}

Overcooked V2 is a recent AI benchmark for ZSC \citep{gessler2025}, which improves on the cooperative multi-agent benchmark Overcooked \citep{carroll2019utility}, by introducing asymmetric information and increased stochasticity, creating more nuanced coordination challenges. In this work, we concentrate on the “Grounded Coordination Simple” layout, leaving the exploration of additional layouts to future work.

For ZSC, we train a population of $5$ IPPO \citep{yu2022surprising} policies as a SP baseline, where each policy uses an RNN coupled with a CNN to process the observations. The population of ER symmetry agents each train $k=12$ IPPO SP policies, to then use Equation \ref{eq:xp-max} / Algorithm \ref{alg:learn-symmetries-xp} to obtain $l = 16$ ER symmetries. Each agent trains $m=3$ $\text{OP}^{\Phi^{\text{ER}}}$ policies using their learned symmetries.

\begin{SCfigure}[1][ht]
    \centering
    \includegraphics[width=0.75\linewidth]{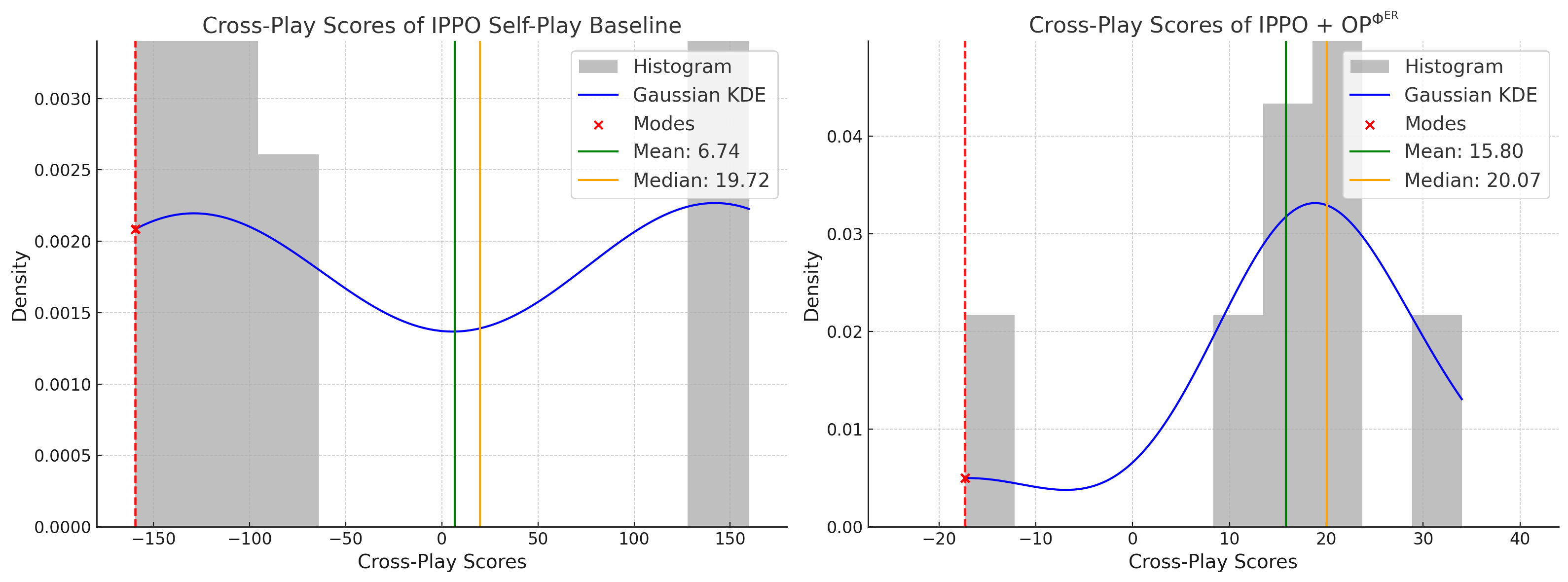}
    \caption{\small Cross-play score distribution of the IPPO self-play baseline population and the ER symmetry agent population in Overcooked V2. The baseline population achieves mean SP scores of $162.33 \pm 0.14$, and the ER symmetry population achieves mean SP scores of $27.81 \pm 0.3$.}
    \label{fig:overcooked-xp}
\end{SCfigure}

The IPPO baseline shows a bimodal XP score distribution, with agents either highly compatible or incompatible. In Figure \ref{fig:overcooked-xp}, the SP-optimal baseline scores a mean XP of 6.74, compared to 15.8 for the $\text{OP}^{\Phi^{\text{ER}}}$-optimal population. The ER symmetry population thus significantly reduces the SP-XP gap, leading to more consistent coordination. Overcooked V2’s simultaneous moves further underscore the effectiveness of ER symmetries for ZSC.

\subsection{Hanabi}\label{sec:hanabi-experiments}

Hanabi (see Appendix \ref{appendix:hanabi} for details) is a challenging AI benchmark, and has served as the primary test bed for many algorithms designed for zero-shot coordination, ad-hoc teamplay, and other cooperative tasks \citep{bard2020hanabi, cui2021k, nekoei2021continuous, nekoei2023towards, muglich2022equivariant, muglich2022generalized}.

\textbf{Preserves OP Optimality}

Since ER symmetries contain Dec-POMDP symmetries, and capture equivalences beyond just relabelling, they are clearly more diverse than Dec-POMDP symmetries, and hence better satisfy Item \ref{item:over-coord} from Section \ref{sec:general-op}. We verify the ER symmetries also approximately satisfy Item \ref{item:preserve-performance}.

We take the 11 learned regularized ER symmetries from above, denoting this set as $\hat{\Phi}^{\text{ER}}$. We find that $\mathbb{E}_{\pi \sim \Pi_*^{\hat{\Phi}^{\text{ER}}}}\left[ \text{OP}^{\hat{\Phi}^{\text{ER}}}(\pi) \right] = 23.59 \pm 0.04$ and $\mathbb{E}_{\pi \sim \Pi_*^{\hat{\Phi}^{\text{ER}}}} \left[ \mathbb{E}_{\phi \sim \hat{\Phi}^{\text{ER}}}\left[ \text{OP}^{\hat{\Phi}^{\text{ER}}}(\phi(\pi)) \right] \right] = 23.34 \pm 0.05$, where we train $5$ $\text{OP}^{\Phi^\text{ER}}$ policies. Thus, Item~\ref{item:preserve-performance} is approximately satisfied by $\hat{\Phi}^{\text{ER}}$.

\textbf{Zero-Shot Coordination}

For ZSC we use two baselines: (1) a population of 5 IPPO agents, and (2) 5 IPPO agents with access to all Dec-POMDP symmetries constrained by the OP objective (Equation \ref{eq:other-play}). We also train a population of ER symmetry agents that independently discover ER symmetries for the OP objective. Each ER agent uses $k=6$ seeds to learn symmetries and saves the top $l=11$ that best preserve expected return. Every population consists of 5 agents, where each agent trains $m=5$ policies and deploys the one with the highest return. Symmetries are trained via Equation \ref{eq:sp-max-with-reg-stoch} by randomly selecting 64 fixed transpositions on the local action space, with the same transpositions fixed for both local policies to simplify the search in the symmetric game Hanabi.

Inspired by the symmetrizer in \citet{treutlein2021new, muglich2022equivariant, van2020mdp}, we define the symmetrizer $S:\Pi \rightarrow \Pi$ by $S(\pi)(a_t \ | \ \tau_t) := \frac{1}{|[\pi]|} \sum_{\pi' \in [\pi]}\pi'(a_t \ | \ \tau_t)$, where $\Phi$ can be taken to be either $\Phi^{\text{ER}}$ or $\Phi^{\text{MDP}}$. Agents from any population can thus transform their policy with $S$ before deploying for cross-play, ensuring invariance of the deployed policy w.r.t. $\Phi$ (since $\phi(S(\pi)) = S(\pi), \forall \phi \in \Phi$). As per the empirical results below, the symmetrizer functions as a policy improvement operator for cross-play across multiple policy populations.

Table \ref{table:hanabi-zsc} shows that the agents using ER symmetries improve in ZSC over both baselines; this is even in spite of the Dec-POMDP agent population assuming access to environment symmetries. As well, even with just using a subset of transformations from $\Phi^{\text{ER}}$, the return symmetry agents are able to converge on policies that generalize well in cross-play. We also notice that symmetrization with respect to $\Phi^{\text{ER}}$ or $\Phi^{\text{MDP}}$ improves coordination amongst agents of all populations considered; we can see that $\Phi^{\text{ER}}$ better improves the optimal self-play policy population, and $\Phi^{\text{MDP}}$ better improves the other two, which aligns with expectation since $\Phi^{\text{ER}}$ is only explicitly enforced to maintain invariance over optimal self-play policies, whereas $\Phi^{\text{MDP}}$ maintains invariance over any policy type.

\vspace{-3.5pt}
\begin{table}[ht]
\captionsetup{skip=5pt}
\centering
\caption{Self-play, within-population mean cross-play (XP) and median cross-play (XP(*)) scores are reported. The $\text{OP}^{\Phi^{\text{MDP}}}$ population used all $120$ Dec-POMDP symmetries, whereas the $\text{OP}^{\Phi^{\text{ER}}}$ population used $11$ ER symmetries. ``MDP'' indicates the population was symmetrized with Dec-POMDP symmetries at test time, and ``ER'' analogously indicates symmetrization with expected return symmetries. The ER symmetrizer uses $11$ expected return symmetries.}
\label{table:hanabi-zsc}
\small
\setlength{\tabcolsep}{3pt} 
\begin{tabular}{lcccccc}
\toprule
\textbf{Model} & \textbf{Self-Play} & \textbf{XP} & \textbf{XP(*)} & \textbf{XP(*)+MDP} & \textbf{XP(*)+ER} \\
\midrule
IPPO & $\mathbf{24.04 \pm 0.02}$ & $4.02 \pm 0.17$ & $0.12 \pm 0.03$ & $0.14 \pm 0.03$ & $0.10 \pm 0.03$ \\
IPPO + $\text{OP}^{\Phi^{\text{MDP}}}$ & $23.81 \pm 0.03$ & $8.61 \pm 0.17$ & $8.14 \pm 0.15$ & $8.70 \pm 0.16$ & $9.91 \pm 0.14$ \\
IPPO + $\text{OP}^{\Phi^{\text{ER}}}$ & $23.74 \pm 0.03$ & $\mathbf{21.64 \pm 0.07}$ & $\mathbf{22.03 \pm 0.05}$ & $\mathbf{22.50 \pm 0.06}$ & $\mathbf{22.25 \pm 0.05}$ \\ 
\bottomrule
\end{tabular}
\end{table}

\section{Related Work}
Extensive research exists on coordination in multi-agent systems, particularly in zero-shot coordination. Methods like \citet{hu2020other, muglich2022equivariant} use Dec-POMDP symmetries to avoid incompatible policies, while \citet{hu2021off} rely on environment dynamics for grounded policies. Diversity-based approaches also leverage known symmetries and simulator access \citep{cui2023adversarial, lupu2021trajectory}. In contrast, ER symmetries can be learned from agent-environment interactions without privileged information, yielding a clean separation from Dec-POMDP symmetries in Section \ref{sec:toy-coordination-experiments} and improving coordination in concurrent environments in Section \ref{sec:overcooked-v2-experiments}.

In single-agent settings, symmetry has been shown to reduce sample complexity in RL \citep{van2020mdp, zhu2022sample, nguyen2024symmetry}. In multi-agent systems, symmetries reduce policy space complexity and help agents identify equivalent strategies \citep{van2021multi, muglich2022equivariant}. However, many methods require explicit knowledge of symmetries, or rely on predefined groups \citep{abreu2023addressing, yu2024leveraging, nguyen2024symmetry}. Our work generalizes these approaches by introducing ER symmetries, which do not require prior symmetry knowledge or equivariant networks, and can be learned directly through environment interactions.

Our work relates to value-based abstraction, which groups states or observations with similar value functions. \citet{rezaei2022continuous} use lax-bisimulation to learn MDP homomorphisms, while \citet{grimm2021proper} learn a model of the underlying MDP for value-based planning. In contrast, we focus on symmetries in the policy space that preserve expected return. ER symmetries are conceptually related to $Q^*$-irrelevance abstractions \citep{li2006towards} in that both aim to preserve the optimal value function of an MDP. However, whereas $Q^*$-irrelevance abstractions reduce complexity by aggregating states, ER symmetries form a group that acts bijectively on the policy space, transforming optimal policies into other policies with the same expected return.

\section{Conclusion}
This paper defined expected return symmetries—a group whose action preserves policy expected return. We demonstrated that the symmetries in this group can be learned purely from interactions with the environment and without requiring privileged environment information. We demonstrated that this symmetry class significantly enhances zero-shot coordination, significantly outperforming traditional Dec-POMDP symmetries, which are a subset of this group. Importantly, we showed that expected return symmetries are effective in challenging settings where state-of-the-art ZSC methods, such as off-belief learning \citep{hu2021off} in Hanabi or approaches based on cognitive hierarchies, either fail completely (e.g., in the lever game) or face difficulties in their application (e.g., Overcooked V2).

One major limitation of our approach is that we constrain the search for symmetries to bijections over the action and observation spaces. While this works well in many settings, as shown in our experiments, there are environments, e.g. the two-lever game, in which this limited expressivity cannot provide enough diversity within the equivalence classes of policies that are optimal w.r.t. $\text{OP}$ with $\Phi^\text{ERS}$, to prevent coordination failure in ZSC. Another limitation is that the success of our method is heavily dependent on the type of policies which are used to learn the ER symmetries. 

Our work opens several avenues for future research. One direction is to explore the use of expected return symmetries in ad-hoc teamwork or single-agent settings. Another is to investigate broader classes of symmetries, beyond the ones that arise from bijections on the actions and observations. 

\newpage

\section*{Acknowledgements}
Darius Muglich is supported by Toshiba Research, as well as the EPSRC centre for Doctoral Training in Autonomous and Intelligent Machines and Systems (EP/S024050/1). Johannes Forkel is funded by the UKI grant EP/Y028481/1 (originally selected for funding by the ERC). Jakob Foerster is partially funded by the UKI grant EP/Y028481/1. Jakob Foerster is also supported by the JPMC Research Award and the Amazon Research Award.
EP/Y028481/1

\bibliographystyle{iclr2025_conference}
\bibliography{main}

\newpage
\appendix
\section{Experimental Setup}\label{appendix:exp-deets}

For expected return symmetry discovery in the three-lever game, each ERS agent trains 20 self-play optimal policies using IQL over 10000 episodes, with an $\epsilon=0.1$ $\epsilon$-greedy behaviour policy and a learning rate of $0.1$. The computed $Q$-values are then used to construct Boltzmann exploratory policies, with temperature $\alpha=1$. Then for each pair of local action permutations, the ERS agent maximizes the objective \ref{eq:sp-max}, by searching over each pair of local observation permutations. The best 6 quadruples of two local action permuations and two local observation permuations are kept as learned expected return symmetries. Each of 5 ERS agents exactly finds the 6 Dec-POMDP symmetries.

For the toy coordination game, we fix a temperature $\alpha<1/2$, so the Boltzmann fixed-point equation $x=\sigma((2x-1)/\alpha)$ has two outer coordinated solutions $x_+>1/2$ and $x_-=1-x_+<1/2$, defining the policy pool $\Pi'=\{\pi_+,\pi_-\}\subset\Pi^\alpha_*$.. We fix $\phi_{\mathcal S}=\phi_{\mathcal O}=\mathbf{Id}$ and learn only the action transformation, parameterized as a distribution over the identity and the swap $0\leftrightarrow1$ applied to both agents. Since both identity and the swap preserve return on $\Pi'$, the unregularized ER objective is indifferent between them; to test recovery of the non-trivial symmetry, we add a small bias away from identity and verify convergence to the swap. This example also illustrates why Boltzmann-exploratory policies are useful in the ER definition: finite-temperature softmax depends on exact induced $Q$-value gaps, not only greedy actions or total return. In this game, although the swap is not a Dec-POMDP symmetry because it changes the raw transition path, the Bellman-backed-up values satisfy $Q_\pi^i(s_0,0)=\pi^{-i}(0\mid s_0)$ and $Q_\pi^i(s_0,1)=\pi^{-i}(1\mid s_0)$, so the swap reindexes the Boltzmann equations, maps $\pi_+\leftrightarrow\pi_-$, and hence satisfies $\phi(\Pi_*^\alpha)=\Pi_*^\alpha$.

For expected return symmetry discovery in Overcooked V2 and Hanabi, we parameterize $\hat{\phi}_{\mathcal{O},\theta}$ as a two hidden layer, feedforward neural network, with each linear layer intialized as a $|\mathcal{O}^i|$-dimensional identity matrix; this choice of initialization is necessary as the symmetry discovery is highly initialization sensitive. We apply ReLU to the final output of the network to promote sparsity in the representation. We build on top of the environment implementation and baseline algorithms in JaxMARL \citep{flair2023jaxmarl}.

For Hanabi, we run each inner loop of Algorithm \ref{alg:learn-symmetries-with-reg-stoch} for $2.25\mathrm{e}{9}$ timesteps across vectorized Hanabi environments. As per Equation \ref{eq:sp-max-with-reg-stoch} we use $\lambda_1 = 0.65$, $\lambda_2 = 2.5\mathrm{e}{-9}$. $\phi_{\mathcal{A},\theta}$ is a fixed action transposition for each learned symmetry. We use a temperature of $\alpha=1$ for Boltzmann exploration.

For Overcooked V2, we run each inner loop of Algorithm \ref{alg:learn-symmetries-xp} for $1.5\mathrm{e}{8}$ timesteps. $\phi_{\mathcal{A},\theta}$ is a learned affine map. We use a temperature of $\alpha=1.1$ for Boltzmann exploration.

For both Hanabi and Overcooked V2, we use PPO and Generalized Advantage Estimation. For Hanabi, we use $4$ epochs, $1024$ environments per pretrained policy, $128$ environment steps per update, $4$ minibatches, $\gamma = 0.99$, GAE Lambda $= 0.95$, CLIP EPS $= 0.2$, VF COEFF = $0.5$, MAX GRAD NORM $ = 0.5$, a learning rate of $1\mathrm{e}{-5}$ and a linear learning rate annealing schedule. For Overcooked V2, we use $4$ epochs, $256$ environments, $256$ environment steps per update, $64$ minibatches, $\gamma = 0.99$, GAE Lambda $= 0.95$, CLIP EPS $= 0.2$, VF COEFF $= 0.5$, MAX GRAD NORM $= 0.25$, a learning rate of $1\mathrm{e}{-5}$ with no annealing.

Methods for Hanabi and Overcooked V2 were ran on A40 and L40 GPUs. 

Code for Toy Coordination Game:

\url{https://colab.research.google.com/drive/1zxIZdQqStYYEDdsoGc_t26ExQ5AxO1aO}

Code for Iterated Three-Lever Game: 

\url{https://colab.research.google.com/drive/1T9LpOkLDBl9BBkjzUelXKND6U8dOvfXV}

Code for Hanabi/Overcooked V2: 

\url{https://github.com/gfppoy/expected-return-symmetries/tree/main}

\newpage

\section{Proofs}\label{appendix:proofs}

\subsection{Group Properties}\label{appendix:group-properties}

\begin{theorem*}
    $\Phi^{\mathrm{ER}} := \Big\{ \phi \in \Psi \;\Big|\; \phi\big(\Pi^\alpha_*\big) = \Pi^\alpha_* \Big\}$ forms a group under function composition.
\end{theorem*}
\begin{proof}
    To show that $\Phi^{\text{ER}}$ forms a group under function composition, we verify the group axioms: closure, associativity, identity, and inverses.

    \underline{Closure}: For any $\phi_1, \phi_2 \in \Phi^{\text{ER}}$, we need to show that $\phi_1 \circ \phi_2 \in \Phi^{\text{ER}}$.
    For the composition $\phi_1 \circ \phi_2$, we see that 
    \begin{align*}
        (\phi_1 \circ \phi_2)(\Pi_*^\alpha) = \phi_1(\phi_2(\Pi_*^\alpha)) = \phi_1(\Pi_*^\alpha) = \Pi_*^\alpha.
    \end{align*}
    which shows that $\phi_1 \circ \phi_2 \in \Phi^{\text{ER}}$.

    \underline{Associativity}: Function composition is associative, so for any $\phi_1, \phi_2, \phi_3 \in \Phi^{\text{ER}}$:
    \[
    (\phi_1 \circ \phi_2) \circ \phi_3 = \phi_1 \circ (\phi_2 \circ \phi_3).
    \]
    Thus, associativity holds.

    \underline{Identity}: The identity function $\mathbf{Id} \in \Psi$ satisfies $\mathbf{Id}(\pi) = \pi$ for all $\pi \in \Pi$, and thus $\mathbf{Id} \in \Phi^{\text{ER}}$ and acts as the identity element.

    \underline{Inverses}: We let $\phi \in \Phi^{\text{ER}}$. Since $\Psi$ is a finite group, there exists a positive integer $k$ such that $\phi^k = \mathbf{Id}$, and thus $\phi^{-1} = \phi^{k-1}$. Since $\phi^{k-1} \in \Phi^{\text{ER}}$ due to closure, we see that $\phi^{-1} \in \Phi^{\text{ER}}$.
    
    Since $\Phi^{\text{ER}}$ satisfies closure, associativity, identity, and inverses, it forms a group under function composition.
\end{proof}

\begin{theorem*}[Dec-POMDP Symmetry Expected Return Invariance]
For any Dec-POMDP symmetry $\phi \in \Phi^{\text{MDP}}$, and any joint policy $\pi$, it holds that 
\[
J(\pi) = J(\phi(\pi)).
\]
\end{theorem*}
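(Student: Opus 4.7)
The plan is to expand $J(\phi(\pi))$ as an expectation over trajectories, then perform a change of variables using the bijection $\phi$ on the state/action/observation spaces, and use the invariances in Equation \ref{eq:decpomdp-invariance} together with the definition of $\phi(\pi)$ in Equation \ref{eq:symm-equiv} to identify the resulting sum with $J(\pi)$.

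More concretely, I would first write a trajectory as $\tau = (s_0, a_0, o_1, s_1, a_1, \dots, s_H)$ and note that its probability under $\pi$ factorizes as
\[
P_\pi(\tau) = p_0(s_0) \prod_{t=0}^{H-1} \pi(a_t \mid \tau_t)\, \mathcal{T}(s_{t+1} \mid s_t, a_t)\, \mathcal{U}(o_{t+1} \mid s_{t+1}, a_t),
\]
where $p_0$ is the initial state distribution (which we assume is $\phi$-invariant; if there is a fixed start state, this is automatic). Then, since $\phi_{\mathcal{S}}$, $\phi_{\mathcal{A}}$, $\phi_{\mathcal{O}}$ are bijections, the map $\tau \mapsto \phi(\tau) := (\phi(s_0), \phi(a_0), \phi(o_1), \dots, \phi(s_H))$ is a bijection on the set of trajectories, so summing over $\tau$ is the same as summing over $\phi(\tau)$.

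The next step is to evaluate $P_{\phi(\pi)}(\phi(\tau))$ factor by factor: the transition and observation terms $\mathcal{T}(\phi(s_{t+1}) \mid \phi(s_t), \phi(a_t))$ and $\mathcal{U}(\phi(o_{t+1}) \mid \phi(s_{t+1}), \phi(a_t))$ coincide with $\mathcal{T}(s_{t+1} \mid s_t, a_t)$ and $\mathcal{U}(o_{t+1} \mid s_{t+1}, a_t)$ by Equation \ref{eq:decpomdp-invariance}, and $\phi(\pi)(\phi(a_t) \mid \phi(\tau_t)) = \pi(a_t \mid \tau_t)$ by the definition in Equation \ref{eq:symm-equiv} (extended inductively so that $\phi(\tau_t)$ is exactly the AOH induced by $\phi(\tau)$ up to time $t$). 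Combining these gives $P_{\phi(\pi)}(\phi(\tau)) = P_\pi(\tau)$. Applying the reward invariance $\mathcal{R}(\phi(s_{t+1}), \phi(a_t)) = \mathcal{R}(s_{t+1}, a_t)$ term by term under the sum, the identity $J(\phi(\pi)) = J(\pi)$ follows after re-indexing the trajectory sum via the bijection.

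The only mildly nontrivial step is justifying that $\phi(\pi)$ does indeed induce the joint-AOH probabilities I claimed, i.e. that the policy-invariance identity extends from single actions/observations to full histories $\tau_t$. This is a straightforward induction on $t$: the base case is immediate, and the inductive step uses that $\phi$ acts componentwise on $\tau_t$ together with Equation \ref{eq:symm-equiv}. A minor subtlety is the treatment of the initial state distribution, which is not made explicit in the Dec-POMDP tuple in Section \ref{sec:decpomdp}; I would either assume $p_0$ is $\phi$-invariant as part of the symmetry definition, or equivalently absorb it into the state transition from a distinguished initial state, at which point it is covered by the invariance of $\mathcal{T}$.
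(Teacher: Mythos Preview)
Your proof is correct, and it takes a genuinely different route from the paper's. The paper proceeds by backward induction on the value function: it shows $V_\pi(\tau_t) = V_{\phi(\pi)}(\phi(\tau_t))$ for all AOHs $\tau_t$ and all $t = 0,\dots,H$, using the Bellman equation at each step, and then reads off $J(\pi) = V_\pi(\tau_0) = V_{\phi(\pi)}(\tau_0) = J(\phi(\pi))$ at the empty history. That argument requires, as an intermediate ingredient, the belief-state invariance $\mathcal{B}_{\phi(\pi)}(\phi(s_t)\mid\phi(\tau_t)) = \mathcal{B}_\pi(s_t\mid\tau_t)$, which the paper asserts but does not prove separately. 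Your trajectory-level change of variables sidesteps the Bellman recursion and the belief-state lemma entirely: once you have $P_{\phi(\pi)}(\phi(\tau)) = P_\pi(\tau)$ and reward invariance, the result is immediate. The trade-off is that the paper's approach yields the stronger intermediate statement that value functions are $\phi$-equivariant at every history (useful if one later wants equivariance of $Q^*$ or of optimal policies), whereas yours gives only the scalar identity $J(\pi)=J(\phi(\pi))$, which is all that is claimed here. Your observation about the initial-state distribution is apt and applies equally to the paper's proof, which implicitly uses it when evaluating at $\tau_0$.
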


\begin{proof}
We prove by induction on the time step $t = 0, 1, \dots, H$ that for every action--observation history (AOH) $\tau_t$, 
\[
V_\pi(\tau_t) = V_{\phi(\pi)}(\phi(\tau_t)).
\]
Since the initial AOH is the empty history (i.e., $\tau_0 = \phi(\tau_0)$), this immediately implies
\[
J(\pi) = V_\pi(\tau_0) = V_{\phi(\pi)}(\phi(\tau_0)) = J(\phi(\pi)).
\]

\medskip
\textbf{Base Case ($t = H$):} \quad At time $H$ the game terminates, so by definition,
\[
V_\pi(\tau_H) = 0 \quad \text{and} \quad V_{\phi(\pi)}(\phi(\tau_H)) = 0.
\]

\medskip
\textbf{Inductive Hypothesis:} \quad Assume that for some $t$ with $t < H$, the equality
\[
V_\pi(\tau_{t+1}) = V_{\phi(\pi)}(\phi(\tau_{t+1}))
\]
holds for every AOH $\tau_{t+1}$.

\medskip
\textbf{Inductive Step (\(t\)):} \quad Consider an arbitrary AOH $\tau_t$. By the Bellman equation, the value function for $\pi$ is
\begin{align*}
V_\pi(\tau_t) &= \sum_{s_t \in \mathcal{S}} \mathcal{B}_\pi(s_t \mid \tau_t) \sum_{a_t \in \mathcal{A}} \pi(a_t \mid \tau_t) \sum_{s_{t+1} \in \mathcal{S}} \mathcal{T}(s_{t+1} \mid s_t, a_t) \\
&\quad \times \left[ \mathcal{R}(s_{t+1}, a_t) + \gamma \sum_{o_{t+1} \in \mathcal{O}} \mathcal{U}(o_{t+1} \mid s_{t+1}, a_t) V_\pi(\tau_{t+1}) \right].
\end{align*}
Similarly, the value function for the transformed policy $\phi(\pi)$ on the transformed history $\phi(\tau_t)$ is
\begin{align*}
V_{\phi(\pi)}(\phi(\tau_t)) &= \sum_{s_t \in \mathcal{S}} \mathcal{B}_{\phi(\pi)}(\phi(s_t) \mid \phi(\tau_t)) \sum_{a_t \in \mathcal{A}} \phi(\pi)(\phi(a_t) \mid \phi(\tau_t)) \\
&\quad \times \sum_{s_{t+1} \in \mathcal{S}} \mathcal{T}(\phi(s_{t+1}) \mid \phi(s_t), \phi(a_t)) \\
&\quad \times \left[ \mathcal{R}(\phi(s_{t+1}), \phi(a_t)) + \gamma \sum_{o_{t+1} \in \mathcal{O}} \mathcal{U}(\phi(o_{t+1}) \mid \phi(s_{t+1}), \phi(a_t)) V_{\phi(\pi)}(\phi(\tau_{t+1})) \right].
\end{align*}
By the symmetry properties (e.g., Equations \ref{eq:symm-equiv} and \ref{eq:decpomdp-invariance}), we have
\[
\mathcal{B}_{\phi(\pi)}(\phi(s_t) \mid \phi(\tau_t)) = \mathcal{B}_\pi(s_t \mid \tau_t), \quad \phi(\pi)(\phi(a_t) \mid \phi(\tau_t)) = \pi(a_t \mid \tau_t),
\]
and the invariance of $\mathcal{T}$, $\mathcal{R}$, and $\mathcal{U}$ under $\phi$. Thus,
\begin{align*}
V_{\phi(\pi)}(\phi(\tau_t)) &= \sum_{s_t \in \mathcal{S}} \mathcal{B}_\pi(s_t \mid \tau_t) \sum_{a_t \in \mathcal{A}} \pi(a_t \mid \tau_t) \sum_{s_{t+1} \in \mathcal{S}} \mathcal{T}(s_{t+1} \mid s_t, a_t) \\
&\quad \times \left[ \mathcal{R}(s_{t+1}, a_t) + \gamma \sum_{o_{t+1} \in \mathcal{O}} \mathcal{U}(o_{t+1} \mid s_{t+1}, a_t) V_{\phi(\pi)}(\phi(\tau_{t+1})) \right].
\end{align*}
By the inductive hypothesis $V_{\phi(\pi)}(\phi(\tau_{t+1})) = V_\pi(\tau_{t+1})$, so it follows that
\[
V_{\phi(\pi)}(\phi(\tau_t)) = V_\pi(\tau_t).
\]
This completes the inductive step.

\medskip
By mathematical induction, the equality $V_\pi(\tau_t) = V_{\phi(\pi)}(\phi(\tau_t))$ holds for all $t = 0,1,\dots,H$, and hence
\[
J(\pi) = V_\pi(\tau_0) = V_{\phi(\pi)}(\phi(\tau_0)) = J(\phi(\pi)).
\]
\end{proof}

\begin{theorem*}[Dec-POMDP Symmetry Boltzmann Exploratory Invariance]
Let \(\phi \in \Phi^{\text{MDP}}\) be a Dec-POMDP symmetry and let \(\pi \in \Pi^\alpha\) be a Boltzmann exploratory joint policy with temperature \(\alpha > 0\). Then, the transformed policy \(\phi(\pi)\) is also Boltzmann exploratory, i.e.,
\[
\phi(\pi)^i\Big(\phi(a_t^i) \mid \phi(\tau_t^i)\Big) = \frac{\exp\left(\frac{Q_{\phi(\pi)}^i\big(\phi(\tau_t^i),\phi(a_t^i)\big)}{\alpha}\right)}{\sum_{a^i \in \phi^{-1} \big(\mathcal{A}^i(\phi(\tau_t^i))\big)} \exp\left(\frac{Q_{\phi(\pi)}^i\big(\phi(\tau_t^i),\phi(a^i)\big)}{\alpha}\right)}.
\]
Moreover, it holds that 
\[
Q_{\phi(\pi)}^i\big(\phi(\tau_t^i),\phi(a_t^i)\big)= Q_\pi^i(\tau_t^i,a_t^i),
\]
for all agents \(i\), times \(t\), AOHs \(\tau_t^i\), and local actions \(a_t^i\).
\end{theorem*}

\begin{proof}
We first prove by induction over the time step \(t\) that for every agent \(i\),
\[
Q_{\phi(\pi)}^i\big(\phi(\tau_t^i),\phi(a_t^i)\big) = Q_\pi^i(\tau_t^i,a_t^i).
\]

\textbf{Base Case (\(t = H\)):}\\
At terminal time \(H\), the episode ends. Hence, by definition we have
\[
Q_\pi^i(\tau_H^i, a_H^i)=0 \quad \text{and} \quad Q_{\phi(\pi)}^i\big(\phi(\tau_H^i),\phi(a_H^i)\big)=0.
\]
Thus, the base case holds.

\bigskip

\textbf{Inductive Hypothesis:}\\
Assume that for some $t < H$, and for all agents \(i\), all local histories \(\tau_{t+1}^i\), and actions \(a_{t+1}^i\), the following holds:
\[
Q_{\phi(\pi)}^i\big(\phi(\tau_{t+1}^i),\phi(a_{t+1}^i)\big) = Q_\pi^i(\tau_{t+1}^i,a_{t+1}^i).
\]

\bigskip

\textbf{Inductive Step (\(t\)):}\\
For a given local AOH \(\tau_t^i\) and local action \(a_t^i\), the Bellman equation gives
\begin{align}\label{eq:Q_pi_full}
&Q_\pi^i(\tau_t^i,a_t^i) \nonumber \\
=& \sum_{s_t, \tau_t} \mathcal{B}_\pi(s_t, \tau_t \mid \tau_t^i) \sum_{a_t^{-i}} \pi^{-i}(a_t^{-i} \mid \tau_t^{-i}) \sum_{s_{t+1}} \mathcal{T}(s_{t+1} \mid s_t, a_t) \\
\times &\left[ \mathcal{R}(s_{t+1},a_t) + \gamma \sum_{o_{t+1}^i} \mathcal{U}^i(o^i_{t+1} \mid s_{t+1}, a_t) \sum_{a_{t+1}^i \in \mathcal{A}^i(\tau_{t+1}^i)} \pi^i(a_{t+1}^i \mid \tau_{t+1}^i) Q_\pi^i(\tau_{t+1}^i,a_{t+1}^i) \right]. \nonumber
\end{align}
Here $\pi^{-i}$, $a_t^{-i}$, $\tau_t^{-i}$, denote the parts of the joint policy $\pi$, joint action $a_t$, and joint AOH $\tau_t$, which do not belong to agent $i$.

Similarly, under the transformed policy \(\phi(\pi)\), the \(Q\)-function for agent \(i\) is given by
\begin{align}\label{eq:Q_phi_pi_full}
&Q_{\phi(\pi)}^i\big(\phi(\tau_t^i),\phi(a_t^i)\big) \nonumber \\
=& \sum_{s_t, \tau_t \in \mathcal{S}} \mathcal{B}_{\phi(\pi)}\big(\phi(s_t), \phi(\tau_t) \mid \phi(\tau_t^i)\big) \sum_{a_t^{-i}} \phi(\pi)^{-i}\Big(\phi(a_t^{-i}) \mid \phi(\tau_t^{-i})\Big) \sum_{s_{t+1}} \mathcal{T}\big(\phi(s_{t+1}) \mid \phi(s_t), \phi(a_t)\big) \nonumber\\[1mm]
&\times \Bigg[ \mathcal{R}\big(\phi(s_{t+1}),\phi(a_t)\big) + \gamma \sum_{o^i_{t+1}} \mathcal{U}^i\big(\phi(o^i_{t+1}) \mid \phi(s_{t+1}), \phi(a_t)\big) \\
&\quad \quad \times \sum_{a^i_{t+1} \in \phi^{-1}\big(\mathcal{A}^i(\phi(\tau_{t+1}^i))\big)} \phi(\pi)^i\Big(\phi(a^i_{t+1}) \mid \phi(\tau_{t+1}^i)\Big) Q_{\phi(\pi)}^i\big(\phi(\tau_{t+1}^i), \phi(a_{t+1}^i)\big) \Bigg]. \nonumber
\end{align}

Since $\phi$ is a Dec-POMDP symmetry, Equation~\ref{eq:decpomdp-invariance} states that
\[
\begin{aligned}
\mathcal{T}\big(\phi(s_{t+1}) \mid \phi(s_t), \phi(a_t)\big) &= \mathcal{T}(s_{t+1} \mid s_t, a_t),\\[1mm]
\mathcal{U}\big(\phi(o_{t+1}) \mid \phi(s_{t+1}), \phi(a_t)\big) &= \mathcal{U}(o_{t+1} \mid s_{t+1}, a_t),\\[1mm]
\mathcal{R}\big(\phi(s_{t+1}),\phi(a_t)\big) &= \mathcal{R}(s_{t+1},a_t).
\end{aligned}
\]
Hence, for the belief $\mathcal{B}_\pi$ it holds that
\[
\mathcal{B}_{\phi(\pi)}\big(\phi(s_t), \phi(\tau_t) \mid \phi(\tau_t^i)\big) = \mathcal{B}_\pi(s_t, \tau_t \mid \tau_t^i).
\]
By definition of $\phi(\pi)$ we see that
\[
\phi(\pi)^{-i}\Big(\phi(a_t^{-i}) \mid \phi(\tau_t^{-i})\Big) = \pi^{-i}(a_t^{-i} \mid \tau_t^{-i}).
\]
Moreover, by the inductive hypothesis,
\[
Q_{\phi(\pi)}^i\big(\phi(\tau_{t+1}^i),\phi(a_{t+1}^i)\big) = Q_\pi^i(\tau_{t+1}^i,a_{t+1}^i).
\]
Finally, it holds that
\[
\phi^{-1} \big( \mathcal{A}^i(\phi(\tau_t^i))\big) = \mathcal{A}^i(\tau_t^i),
\]
as this is a necessary condition on $\phi$, such that $\phi(\pi)$ is well-defined for all $\pi$.\\

Substituting all of the above equations  into Equation \ref{eq:Q_phi_pi_full}, we obtain:
\[
\begin{aligned}
Q_{\phi(\pi)}^i\big(\phi(\tau_t^i),\phi(a_t^i)\big) = Q_\pi^i(\tau_t^i,a_t^i).
\end{aligned}
\]
This completes the induction step.\\

Therefore, for any agent \(i\), local history \(\tau_t^i\), and action \(a_t^i\), the transformed policy satisfies
\[
\begin{aligned}
\phi(\pi)^i\Big(\phi(a_t^i) \,\Big|\, \phi(\tau_t^i)\Big) = \pi^i(a_t^i | \tau_t^i)
&=\frac{\exp\Big(\frac{Q_\pi^i(\tau_t^i,a_t^i)}{\alpha}\Big)}
{\sum_{a^i\in \mathcal{A}^i(\tau_t^i)} \exp\Big(\frac{Q_\pi^i(\tau_t^i,a^i)}{\alpha}\Big)} \\
&=\frac{\exp\left(\frac{Q_{\phi(\pi)}^i\big(\phi(\tau_t^i),\phi(a_t^i)\big)}{\alpha}\right)}{\sum_{a^i \in \phi^{-1} \big(\mathcal{A}^i(\phi(\tau_t^i))\big)} \exp\left(\frac{Q_{\phi(\pi)}^i\big(\phi(\tau_t^i),\phi(a^i)\big)}{\alpha}\right)}.
\end{aligned}
\]
which shows that \(\phi(\pi)\in \Pi^\alpha\).

\end{proof}

\subsection{Policy-Gradient Based Boltzmann-Exploratory Policies} \label{appendix:boltzmann-exp}

Recall that for a decentralized partially observable setting, each agent \(i\) acts based on its local action–observation history (AOH) \(\tau_t^i\). The value of $\tau_t^i$, when assuming that all agents follow their respective local policies in $\pi$, is given by
\[
V_\pi^i(\tau_t^i) = \sum_{a^i \in \mathcal{A}^i(\tau_t^i)} \pi^i(a_t^i \mid \tau_t^i) \, Q_\pi^i(\tau_t^i, a_t^i),
\]
where \(Q_\pi^i(\tau_t^i, a_t^i)\) is the local action–value function for agent \(i\), when assuming that all agents follow their respective local policies in \(\pi\). 

In practice, actor–critic methods update the critic and actor on different timescales. With appropriate learning rate schedules, the local value estimates \(Q_\pi^i(\tau_t^i,a_t^i)\) converge and approximately satisfy the Bellman equations. In this regime, we may treat \(Q_\pi^i(\tau_t^i,a_t^i)\) as fixed when optimizing the policy. Then, for agent \(i\), the entropy-regularized objective becomes
\[
J_{\tau_t^i}(\pi^i) = \sum_{a^i \in \mathcal{A}^i(\tau_t^i)} \pi^i(a_t^i \mid \tau_t^i) \, Q_\pi^i(\tau_t^i,a_t^i) + \alpha\, H\bigl(\pi^i(\cdot \mid \tau_t^i)\bigr),
\]
with the entropy given by
\[
H\bigl(\pi^i(\cdot \mid \tau_t^i)\bigr) = -\sum_{a^i \in \mathcal{A}^i(\tau_t^i)} \pi^i(a_t^i \mid \tau_t^i) \log \pi^i(a_t^i \mid \tau_t^i).
\]
Because \(Q_\pi^i(\tau_t^i,a_t^i)\) is treated as fixed and the entropy function is strictly concave, the local objective \(J_{\tau_t^i}(\pi^i)\) is strictly concave in \(\pi^i(\cdot \mid \tau_t^i)\). Hence, its unique maximizer can be computed independently for each agent and each local history by solving a Lagrangian optimization problem.

The following theorem formalizes that the unique maximizer of \(J_{\tau_t^i}(\pi^i)\) is the Boltzmann (softmax) policy.

\begin{theorem*}[Local Boltzmann Policy from Entropy-Regularized Objective]
For a fixed local AOH \(\tau_t^i\) of agent \(i\), the optimal local policy that maximizes
\[
J_{\tau_t^i}(\pi^i) = \sum_{a^i \in \mathcal{A}^i(\tau_t^i)} \pi^i(a_t^i \mid \tau_t^i) \, Q_\pi^i(\tau_t^i,a_t^i) + \alpha\, H\bigl(\pi^i(\cdot \mid \tau_t^i)\bigr)
\]
subject to
\[
\sum_{a^i \in \mathcal{A}^i(\tau_t^i)} \pi^i(a_t^i \mid \tau_t^i) = 1,
\]
is given by
\[
\pi^i(a_t^i \mid \tau_t^i) = \frac{\exp\!\left(\frac{Q_\pi^i(\tau_t^i,a_t^i)}{\alpha}\right)}{\sum_{a' \in \mathcal{A}^i(\tau_t^i)} \exp\!\left(\frac{Q_\pi^i(\tau_t^i,a')}{\alpha}\right)}.
\]
\end{theorem*}

\begin{proof}
Assume that for the fixed local AOH \(\tau_t^i\), the value estimates \(Q_\pi^i(\tau_t^i, a_t^i)\) are treated as constant. Define the Lagrangian
\begin{align*}
\mathcal{L}(\pi^i,\lambda) =& \sum_{a^i \in \mathcal{A}^i(\tau_t^i)} \pi^i(a_t^i \mid \tau_t^i) \, Q_\pi^i(\tau_t^i,a_t^i) \\
&- \alpha \sum_{a^i \in \mathcal{A}^i(\tau_t^i)} \pi^i(a_t^i \mid \tau_t^i) \log \pi^i(a_t^i \mid \tau_t^i) + \lambda\left( \sum_{a^i \in \mathcal{A}^i(\tau_t^i)} \pi^i(a_t^i \mid \tau_t^i) - 1 \right).
\end{align*}
Taking the derivative with respect to \(\pi^i(a_t^i \mid \tau_t^i)\) for each \(a^i \in \mathcal{A}^i(\tau_t^i)\) yields
\[
\frac{\partial \mathcal{L}}{\partial \pi^i(a_t^i \mid \tau_t^i)} = Q_\pi^i(\tau_t^i,a_t^i) - \alpha\bigl( \log \pi^i(a_t^i \mid \tau_t^i) + 1 \bigr) + \lambda = 0.
\]
Rearrange this equation to obtain
\[
\log \pi^i(a_t^i \mid \tau_t^i) = \frac{Q_\pi^i(\tau_t^i,a_t^i) + \lambda - \alpha}{\alpha}.
\]
Exponentiating both sides gives
\[
\pi^i(a_t^i \mid \tau_t^i) = \exp\!\left(\frac{Q_\pi^i(\tau_t^i,a_t^i)}{\alpha}\right) \exp\!\left(\frac{\lambda - \alpha}{\alpha}\right).
\]
Since the term \(\exp\!\left(\frac{\lambda - \alpha}{\alpha}\right)\) is independent of \(a^i\), it is determined by the normalization constraint:
\[
\sum_{a^i \in \mathcal{A}^i(\tau_t^i)} \pi^i(a_t^i \mid \tau_t^i) = \exp\!\left(\frac{\lambda - \alpha}{\alpha}\right) \sum_{a^i \in \mathcal{A}^i(\tau_t^i)} \exp\!\left(\frac{Q_\pi^i(\tau_t^i,a_t^i)}{\alpha}\right) = 1.
\]
Defining
\[
Z^i(\tau_t^i) = \sum_{a^i \in \mathcal{A}^i(\tau_t^i)} \exp\!\left(\frac{Q_\pi^i(\tau_t^i,a_t^i)}{\alpha}\right),
\]
we have
\[
\exp\!\left(\frac{\lambda - \alpha}{\alpha}\right) = \frac{1}{Z^i(\tau_t^i)}.
\]
Thus, the unique maximizer is given by
\[
\pi^i(a_t^i \mid \tau_t^i) = \frac{\exp\!\left(\frac{Q_\pi^i(\tau_t^i,a_t^i)}{\alpha}\right)}{Z^i(\tau_t^i)} = \frac{\exp\!\left(\frac{Q_\pi^i(\tau_t^i,a_t^i)}{\alpha}\right)}{\sum_{a' \in \mathcal{A}^i(\tau_t^i)} \exp\!\left(\frac{Q_\pi^i(\tau_t^i,a')}{\alpha}\right)}.
\]
\end{proof}

Note the \(Q\)-functions above can be provided by a critic network or be estimated via empirical returns. When the local \(Q_\pi^i(\tau_t^i,a_t^i)\) estimates have converged, the local objective decouples over \(\tau_t^i\) and its unique maximizer is the Boltzmann (softmax) policy. This motivates our definition of the set of Boltzmann exploratory policies,
\[
\Pi^\alpha := \left\{ \pi \in \Pi \;\Big|\, \pi^i(a_t^i \mid \tau_t^i) = \frac{\exp\!\left(\frac{Q_\pi^i(\tau_t^i,a_t^i)}{\alpha}\right)}{\sum_{a^i \in \mathcal{A}^i(\tau_t^i)} \exp\!\left(\frac{Q_\pi^i(\tau_t^i,a^i)}{\alpha}\right)} \;,\; \forall\, a_t^i,\, \forall\, \tau_t^i,\, \forall\, i=1,\dots,n \right\},
\]
and, correspondingly, our definition of the set of self-play optimal Boltzmann-exploratory policies 
\[
\Pi^\alpha_* := \argmax_{\pi \in \Pi^\alpha} J(\pi).
\]
Finally, we define expected return symmetries,
\[
\Phi^{\mathrm{ER}} := \Big\{ \phi \in \Psi \;\Big|\; \forall \pi \in \Pi^\alpha_*: \phi(\pi) \in \Pi^\alpha_* \Big\},
\]
as the subset of transformations preserving self-play optimality of Boltzmann-exploratory policies.

Defining expected return symmetries in terms of Boltzmann-exploratory policies, rather than just $\epsilon$-soft policies, is necessary. If defined via $\epsilon$-soft policies, permuting any set of actions which are suboptimal in self-play would be an expected return symmetry, since all suboptimal actions would be taken with the same probability.

\section{Learned Transformations Satisfy Group Properties}\label{appendix:empirical-group-properties}
This section analyzes the learned ER symmetries for their group properties.

We first train six (near-)optimal IPPO policies with independent seeds as $\Pi'$, obtaining a mean expected return of $\mathbb{E}_{\pi \sim \Pi'}[J(\pi)] = 24.04 \pm 0.02$. Next, we randomly select $64$ local-action-space transpositions, and learn the corresponding $\{\phi_{\mathcal{O}, \theta_l}\}_{l=1}^{64}$ via optimizing Equation \ref{eq:sp-max} / Algorithm \ref{alg:learn-symmetries}. That is, we fix the same action transposition for each local policy, significantly constraining the search. This is a significant undersampling of the $190$ possible transpositions on each local action space, yet as we show below we still learn effective ER symmetries for coordination. We then save the 11 best transformations (those that maximize Equation \ref{eq:sp-max}) as unregularized ER symmetries. These are used to maximize Equation \ref{eq:sp-max-with-reg-stoch} / Algorithm \ref{alg:learn-symmetries-with-reg-stoch} on another set of $64$ random transpositions, now enforcing compositional closure and invertibility. The $11$ best are saved as regularized ER symmetries.

\vspace{0pt}
\begin{table}[ht]
\captionsetup{skip=5pt}
\centering
\caption{Comparison of 11 unregularized and 11 regularized ER symmetries applied to 6 unseen optimal policies ($|\Pi_{\text{unseen}}|=6$). Regularization enforces compositionality and invertibility. For $k = 1,2,3$, let $J_k = \mathbb{E}_{\phi_i \sim \Phi} \mathbb{E}_{\pi \sim \Pi_{\text{unseen}}} [ J( (\phi_1 \circ \dots \circ \phi_k)(\pi) ) ]$ denote the expected return after composing $k$ randomly sampled transformations. We report Single Transform. ($J_1$), Double Comp. ($J_2$), and Triple Comp. ($J_3$). Relative Reconstruction Loss measures approximate invertibility (lower is better): $\mathbb{E}_{\pi \sim \Pi_\text{unseen}} \mathbb{E}_{\tau \sim \pi} [\frac{||\tau - \phi_{\mathcal{O}}^2(\tau)||}{||\tau||}]$, using the $\ell_2$ norm for AOH vectors. Recall $\mathbb{E}_{\pi \sim \Pi'} [J(\pi)] = 24.04 \pm 0.02$.}
\label{table:comps}
\begin{tabular}{lcccc}
\toprule
& Single Transform. & Double Comp. & Triple Comp. & Rel. Rec. Loss \\ 
\midrule
\textbf{Unreg.} & $22.88 \pm 0.07$ & $21.10 \pm 0.09$ & $20.36 \pm 0.11$ & $31.4\% \pm 1.8\%$\\
\midrule 
\textbf{Reg.} & $23.32 \pm 0.05$ & $22.16 \pm 0.07$ & $20.94 \pm 0.12$ & $16.7\% \pm 0.18\%$ \\
\bottomrule
\end{tabular}
\end{table}

Table \ref{table:comps} shows that up to minor deviations in expected return preservation, the learned ER symmetries still approximately satisfy closure under composition. In addition, when invertibility is enforced, the relative reconstruction loss decreases substantially (a lower relative reconstruction loss tells us applying the transformation twice brings us closer to the original AOH, suggesting approximate invertibility). We conclude that the learned ER symmetries, especially the regularized ones, approximately satisfy the desired group-theoretic properties.

\newpage

\section{Algorithms}\label{appendix:algorithms}

\begin{algorithm}
\caption{Learning Expected Return Symmetries with Policy Gradients (without enforcing compositionality nor invertibility) \dots Optimization of Equation \ref{eq:sp-max}}\label{alg:learn-symmetries}
\begin{algorithmic}[1]
\State \textbf{Input:} A Dec-POMDP, a set $\Pi'$ of joint policies in it, a parameterization $\phi_{\mathcal{O}, \theta}$, $\theta \in \Theta$, a learning rate $\eta > 0$, $l$ for the number of top transformations to save

\State Initialize list of top \(l\) average expected returns: $\bar{J}_{\text{top}} = [-\infty, \dots, -\infty]$ (length $l$)
\State Initialize list of top \(l\) transformations: $\phi_{\text{top}} = [\emptyset, \dots, \emptyset]$ (length $l$)

\For{each tuple of local action transpositions $\phi_{\mathcal{A}} \in \prod_{i = 1}^n \text{Transpositions}(\mathcal{A}^i)$} 
    \State Initialize $\phi_{\mathcal{O},\theta}$ with random parameters $\theta \in \Theta$
    \While{not converged}
        \For{each policy $\pi \in \Pi'$}
            \State Sample a batch $\mathcal{B}$ of joint AOHs, and the corresponding sequences of returns, using the transformed policy $\phi_\theta(\pi) = (\phi_{\mathcal{A}}, \phi_{\mathcal{O}, \theta})(\pi)$
            \State Compute advantage $A^{\phi_\theta(\pi)}(\tau_t, a_t)$ for all $t = 0, ..., H-1$, using any advantage function (e.g., TD, GAE)
            \State Compute policy gradient:
            \[
            \nabla_\theta J(\phi_\theta(\pi)) \approx \frac{1}{|\mathcal{B}|} \sum_{\tau_H \in \mathcal{B}} \left[ \sum_{t=0}^{H-1} \nabla_\theta \log \phi_\theta(\pi)(a_t \ | \ \tau_t) A^{\phi_\theta(\pi)}(\tau_t, a_t) \right]
            \]
            \State Update parameters: $\theta \gets \theta + \eta \nabla_\theta J(\phi_\theta(\pi))$
        \EndFor
    \EndWhile
    \State Compute average expected return $\bar{J}_{\phi_\theta}$, where for every $\pi \in \Pi'$ the expected return $J(\phi_\theta(\pi))$ is approximated by the average return over a number of episodes:
    \[
    \bar{J}_{\phi_\theta} \approx \frac{1}{|\Pi'|} \sum_{\pi \in \Pi'} J(\phi_\theta(\pi))
    \]
    \State Find the index of the lowest return in $\bar{J}_{\text{top}}$, say $i_{\text{min}}$
    \If{$\bar{J}_{\phi_\theta} > \bar{J}_{\text{top}}[i_{\text{min}}]$}
        \State Replace the lowest return: $\bar{J}_{\text{top}}[i_{\text{min}}] \gets \bar{J}_{\phi_\theta}$
        \State Replace the corresponding transformation: $\phi_{\text{top}}[i_{\text{min}}] \gets (\phi_{\mathcal{O},\theta}, \phi_{\mathcal{A},\theta})$:
    \EndIf
\EndFor

\State \textbf{Output:} Set $\phi_\text{top}$ of the best $l$ learned transformations
\end{algorithmic}
\end{algorithm}

\begin{algorithm}
\caption{Learning Expected Return Symmetries (enforcing compositionality and invertibility) \dots Optimization of Equation \ref{eq:sp-max-with-reg-stoch}}\label{alg:learn-symmetries-with-reg-stoch}
\begin{algorithmic}[1]
\State \textbf{Input:} A Dec-POMDP, a set $\Pi'$ of joint policies in it, a parameterization $\phi_{\mathcal{O}, \theta}$, $\theta \in \Theta$, a learning rate $\eta > 0$, transformations $\{\phi_1, \dots, \phi_m\}$ obtained from Algorithm \ref{alg:learn-symmetries}, $l$ for the number of top transformations to save, regularization weights $\lambda_1, \lambda_2$

\State Initialize list of top \(l\) average expected returns: $\bar{J}_{\text{top}} = [-\infty, \dots, -\infty]$ (length $l$)
\State Initialize list of top \(l\) transformations: $\phi_{\text{top}} = [\emptyset, \dots, \emptyset]$ (length $l$)

\For{each tuple of local action transpositions $\phi_{\mathcal{A}} \in \prod_{i = 1}^n \text{Transpositions}(\mathcal{A}^i)$} 
    \State Initialize $\phi_{\mathcal{O},\theta}$ with random $\theta \in \Theta$
    \While{not converged}
        \For{each policy $\pi \in \Pi'$}
            \State With probability $1-\lambda_1$ set $\tilde{\phi}_\theta = \phi_\theta$, and with probability $\lambda_1$ sample $\phi_i, \phi_j \in \{\phi_1, \dots, \phi_m\}$ and set $\tilde{\phi}_\theta = \phi_i \circ \phi_\theta \circ \phi_j$
            \State Sample a batch $\mathcal{B}$ of joint AOHs, and the corresponding sequences of returns, using the transformed policy $\tilde{\phi}_\theta(\pi)$
            \State Compute advantage $A^{\tilde{\phi}_\theta(\pi)}(\tau_t, a_t)$, for all $t = 0, ..., H-1$, using any advantage function (e.g., TD, GAE)
            \State Compute the invertibility regularization term $L(\theta) = \frac{1}{|\mathcal{O}|} \sum_{o \in \mathcal{O}} \left[d(o, \phi_{\mathcal{O},\theta}^2(o))^2 \right]$, and its gradient $\nabla_\theta L(\theta)$
            \State Compute policy gradient: 
            \[
            \nabla_\theta J(\tilde{\phi}_\theta(\pi)) \approx \frac{1}{|\mathcal{B}|} \sum_{\tau_H \in \mathcal{B}} \left[ \sum_{t=0}^{H-1} \nabla_\theta \log \tilde{\phi}_\theta(\pi)(a_t \mid \tau_t) A^{\tilde{\phi}_\theta(\pi)}(\tau_t, a_t) \right]
            \]
            \State Update $\theta \gets \theta + \eta \big(\nabla_\theta J(\tilde{\phi}_\theta(\pi)) - \lambda_2 \nabla_\theta L(\theta)\big)$
        \EndFor
        \State Compute average expected return $\bar{J}_{\phi_\theta}$, where for every $\pi \in \Pi'$ the expected return $J(\phi_\theta(\pi))$ is approximated by the average return over a number of episodes
            \[
            \bar{J}_{\phi_\theta} \approx \frac{1}{|\Pi'|} \sum_{\pi \in \Pi'} J(\phi_\theta(\pi))
            \]
    \EndWhile
    \State Find the index of the lowest return in $\bar{J}_{\text{top}}$, say $i_{\text{min}}$
    \If{$\bar{J}_{\phi_\theta} > \bar{J}_{\text{top}}[i_{\text{min}}]$}
        \State Replace the lowest return: $\bar{J}_{\text{top}}[i_{\text{min}}] \gets \bar{J}_{\phi_\theta}$
        \State Replace the corresponding transformation: $\phi_{\text{top}}[i_{\text{min}}] \gets (\phi_{\mathcal{O},\theta}, \phi_{\mathcal{A},\theta})$
    \EndIf
\EndFor
\State \textbf{Output:} Set $\phi_\text{top}$ of the best $l$ learned transformations
\end{algorithmic}
\end{algorithm}

\begin{algorithm}
\caption{Learning Expected Return Symmetries through cross-play maximization between pairs of Policies \dots Optimization of Equation \ref{eq:xp-max}}\label{alg:learn-symmetries-xp}
\begin{algorithmic}[1]
\State \textbf{Input:} A Dec-POMDP, a set $\Pi'$ of joint policies in it, a parameterization $\phi_{\mathcal{O}, \theta}$, $\theta \in \Theta$, a learning rate $\eta > 0$, $l$ for the number of top transformations to save

\State Initialize list of top \(l\) average expected returns: $\bar{J}_{\text{top}} = [-\infty, \dots, -\infty]$ (length $l$)
\State Initialize list of top \(l\) transformations: $\phi_{\text{top}} = [\emptyset, \dots, \emptyset]$ (length $l$)

\For{each pair of joint policies $(\pi_i, \pi_j) \in \Pi' \times \Pi'\setminus \{(\pi, \pi) \, | \, \pi \in \Pi'\}$}
    \State Initialize $\phi_{\mathcal{O},\theta}$ and $\phi_{\mathcal{A},\theta}$ with random parameters $\theta \in \Theta$
    \While{not converged}
        \State Sample a batch $\mathcal{B}$ of joint AOHs, and the corresponding sequences of returns, using the transformed pair $(\pi_i^1, \phi_\theta(\pi_j^2))$
        \State Compute advantage $A^{(\pi_i^1, \phi_\theta(\pi_j^2))}(\tau_t, a_t)$, for all $t = 0, ..., H-1$, using any advantage function (e.g., TD, GAE)
        \State Compute policy gradient:
        \[
        \nabla_\theta J(\pi_i^1, \phi_\theta(\pi_j^2)) \approx \frac{1}{|\mathcal{B}|} \sum_{\tau_H \in \mathcal{B}} \left[ \sum_{t=0}^{H-1} \nabla_\theta \log \phi_\theta(\pi_j^2)(a_t \ | \ \tau_t) A^{(\pi_i^1, \phi_\theta(\pi_j^2))}(\tau_t, a_t) \right]
        \]
        \State Update parameters: $\theta \gets \theta + \eta \nabla_\theta J(\pi_i^1, \phi_\theta(\pi_j^2))$
        
        \State Compute average return $\bar{J}_{(\pi_i^1, \phi_\theta(\pi_j^2))} \approx J(\pi_i^1, \phi_\theta(\pi_j^2))$ over a number of episodes
    \EndWhile
    
    \State Find the index of the lowest return in $\bar{J}_{\text{top}}$, say $i_{\text{min}}$
    \If{$\bar{J}_{(\pi_i^1, \phi_\theta(\pi_j^2))} > \bar{J}_{\text{top}}[i_{\text{min}}]$}
        \State Replace the lowest return: $\bar{J}_{\text{top}}[i_{\text{min}}] \gets \bar{J}_{(\pi_i^1, \phi_\theta(\pi_j^2))}$
        \State Replace the corresponding transformation: $\phi_{\text{top}}[i_{\text{min}}] \gets (\phi_{\mathcal{O},\theta}, \phi_{\mathcal{A},\theta})$
    \EndIf
\EndFor

\State \textbf{Output:} Set $\phi_\text{top}$ of the best $l$ learned transformations
\end{algorithmic}
\end{algorithm}

\FloatBarrier
\section{Interpretability of Hanabi OP Agents}\label{appendix:interpretability}

\begin{figure}[H]
    \centering
    \includegraphics[width=0.8\linewidth]{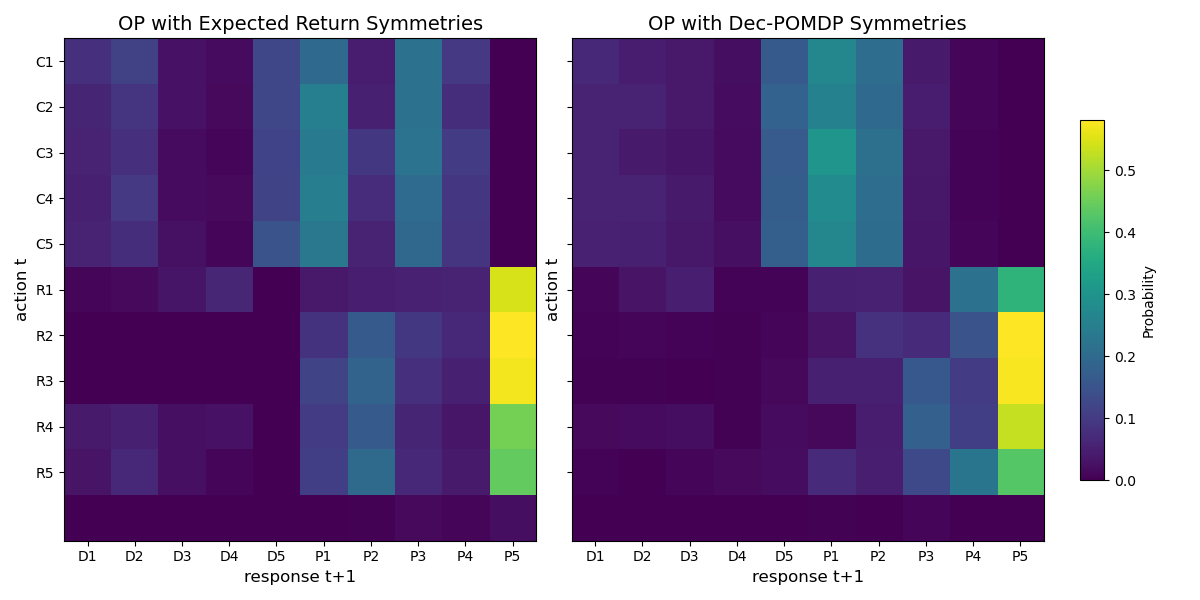}
    \caption{Conditional action matrices of $\text{OP}^{\Phi^{\text{MDP}}}$-optimal and $\text{OP}^{\Phi^{\text{ER}}}$-optimal policies; i.e., $P(a_t^i \ | \ a_{t-1}^j)$. We select the agent from both respective populations achieving the highest cross-play scores. We can see the $\text{OP}^{\Phi^{\text{ER}}}$-optimal policy more consistently uses a rank hint to signal playing the fifth card, whereas the $\text{OP}^{\Phi^{\text{MDP}}}$-optimal policy uses a similar convention but less consistently.}
    \label{fig:enter-label}
\end{figure}

\FloatBarrier

\section{Hanabi}\label{appendix:hanabi}

Hanabi is a cooperative card game that can be played with 2 to 5 people. Hanabi is a popular game, having been crowned the 2013 ``Spiel des Jahres'' award, a German industry award given to the best board game of the year. Hanabi has been proposed as an AI benchmark task to test models of cooperative play that act under partial information \cite{bard2020hanabi}. To date, Hanabi has one of the largest state spaces of all Dec-POMDP benchmarks.

The deck of cards in Hanabi is comprised of five colors (white, yellow, green, blue and red), and five ranks (1 through 5), where for each color there are three 1's, two each of 2's, 3's and 4's, and one 5, for a total deck size of fifty cards. Each player is dealt five cards (or four cards if there are 4 or 5 players). At the start, the players collectively have eight information tokens and three fuse tokens, the uses of which shall be explained presently.

In Hanabi, players can see all other players' hands but their own. The goal of the game is to play cards to collectively form five consecutively ordered stacks, one for each color, beginning with a card of rank 1 and ending with a card of rank 5. These stacks are referred to as fireworks, as playing the cards in order is meant to draw analogy to setting up a firework display.

We call the player whose turn it is the active agent. The active agent must conduct one of three actions:

\begin{itemize}
    \item \textbf{Hint} - The active agent chooses another player to grant a hint to. A hint involves the active agent choosing a color or rank, and revealing to their chosen partner all cards in the partner's hand that satisfy the chosen color or rank. Performing a hint exhausts an information token. If the players have no information tokens, a hint may not be conducted and the active agent must either conduct a discard or a play.
    
    \item \textbf{Discard} - The active agent chooses one of the cards in their hand to discard. The identity of the discarded card is revealed to the active agent and becomes public information. Discarding a card replenishes an information token should the players have less than eight.
    
    \item \textbf{Play} - The active agent attempts to play one of the cards in their hand. The identity of the played card is revealed to the active agent and becomes public information. The active agent has played successfully if their played card is the next in the firework of its color to be played, and the played card is then added to the sequence. If a firework is completed, the players receive a new information token should they have less than eight. If the player is unsuccessful, the card is discarded, without replenishment of an information token, and the players lose a fuse token.
    
\end{itemize}

The game ends when all three fuse tokens are spent, when the players successfully complete all five fireworks, or when the last card in the deck is drawn and all players take one last turn. If the game finishes by depletion of all fuse tokens (i.e. by ``bombing out''), the players receive a score of 0. Otherwise, the score of the finished game is the sum of the highest card ranks in each firework, for a highest possible score of 25.

More facts about Hanabi: 
\begin{enumerate}
    \item The Dec-POMDP symmetries correspond to permutations of the five card colors ($5! = 120$).
    \item In two-player Hanabi, there are $20$ possible actions per turn, organized into four types: Play, Discard, Color Hint, and Rank Hint. These yield $190$ distinct action transpositions.
    \item A perfect score is $25$, though some deck permutations make this score unreachable, so no policy can guarantee an expected return of $25$.
\end{enumerate}

\end{document}